\newtheorem{lemma}{Lemma}
\newtheorem{theorem}{Theorem}
\newtheorem{definition}{Definition}
\newcommand{\fc} {\mathcal{C}}
\newcommand{\ff} {\mathcal{F}}
\newcommand{\propose}{\texttt{propose}\xspace}
\newcommand{\vote}{\texttt{vote}\xspace}
\newcommand{\state}{\texttt{state}\xspace}
\newcommand{\timeout}{\texttt{timeout}\xspace}
\newcommand{\marker}{\textsl{marker}\xspace}
\newcommand{\strongvote}{\texttt{strong-vote}\xspace}
\newcommand{\strongqc}{\texttt{strong-QC}\xspace}
\newcommand{\strongvotes}{\texttt{strong-vote}s\xspace}
\newcommand{\strongqcs}{\texttt{strong-QC}s\xspace}
\newtcolorbox{mybox}[1][]{
enhanced,
colback=white,
boxsep=0pt,
#1
}
\title{Strengthened Fault Tolerance in Byzantine Fault Tolerant Replication}
\author[1]{Zhuolun Xiang\thanks{xiangzl@illinois.edu, work done at Novi}}
\author[2]{Dahlia Malkhi\thanks{{dmalkhi@diem.com}}}
\author[3]{Kartik Nayak\thanks{kartik@cs.duke.edu}}
\author[4]{Ling Ren\thanks{renling@illinois.edu}}
\affil[1,2]{Novi}
\affil[3]{Duke University}
\affil[1,4]{University of Illinois at Urbana-Champaign}
\date{}
\begin{document}

\maketitle

\begin{abstract}
Byzantine fault tolerant (BFT) state machine replication (SMR) is an important building block for constructing permissioned blockchain systems.
In contrast to Nakamoto Consensus where any block obtains higher assurance as buried deeper in the blockchain, in BFT SMR, any committed block is secure has a fixed resilience threshold.
In this paper, we investigate {\em strengthened fault tolerance (SFT)} in BFT SMR under partial synchrony, which provides gradually increased resilience guarantees (like Nakamoto Consensus) during an optimistic period when the network is synchronous and the number of Byzantine faults is small.
Moreover, the committed blocks can tolerate more than one-third (up to two-thirds) corruptions even after the optimistic period.
Compared to the prior best solution Flexible BFT which requires quadratic message complexity, our solution maintains the linear message complexity of state-of-the-art BFT SMR protocols and requires only marginal bookkeeping overhead.
We implement our solution over the open-source Diem project, and give experimental results that demonstrate its efficiency under real-world scenarios.
\end{abstract}

\section{Introduction}
\label{sec:introduction}

An extremely simple consensus protocol named Nakamoto consensus~\cite{nakamoto2008bitcoin} powers several of the world's largest crypto-currencies. In Nakamoto consensus, honest participants adopt and attempt to extend the longest proof-of-work (PoW) chain, and a block buried sufficiently deep in the blockchain is considered confirmed (with overwhelming probability).
An additional property of Nakamoto consensus is that, assuming honest majority computation power,
a block $B$ in the chain obtains higher assurance as the chain extending $B$ grows longer.
This provides a trade-off between transaction confirmation latency and safety, i.e., clients can choose to wait longer in exchange for more secure confirmations.

Differently, in permissioned consensus protocols, where the set of participants is fixed and known, transaction confirmation is binary: confirmation is safe and final so long as the number of malicious parties do not exceed a threshold of the system (one-third for asynchronous or partially synchronous networks~\cite{dwork1988consensus});
it is irreparably unsafe otherwise. 

It is thus natural to ask whether the safety assurance of a decision can be made to improve over time in permissioned protocols.
A recent algorithm called FBFT~\cite{malkhi2019flexible} provided such a solution (though it formulated the problem differently using a diverse client model).
In particular, a block can obtain resilience against a higher number of Byzantine failures, from one-third up to two-thirds, as it gathers additional confirmations (votes for the block).
We refer to this notion as {\em strengthened fault tolerance (SFT)}.
The term strengthened reflects the fact that stronger resilience guarantees are obtained given that the conditions are optimistic, e.g., when the network is synchronous and the number of Byzantine faults is small during the optimistic period. Once blocks are committed with higher resilience, they are safe even if the number of malicious parties later exceeds the one-third threshold.
Similar to the $k$-deep rule of Nakamoto consensus, with SFT the clients can choose to wait longer for valuable blocks to obtain higher resilience, trading off safety with latency.

While FBFT indirectly provided a solution to SFT, it did so on top of traditional BFT protocols (PBFT~\cite{castro1999practical}), and is hence not as simple and not ``chained'' or pipelined like Nakamoto consensus or recent chain-based BFT SMR protocols such as HotStuff~\cite{yin2019hotstuff}. 
More importantly, building the FBFT adaptability requires complex bookkeeping that maintain incremental confirmation, and incurs a quadratic message complexity overhead when applied to recent linear BFT protocols such as HotStuff~\cite{yin2019hotstuff,baudet2019state} (see Section~\ref{sec:diembft:oft}).

In this paper, we introduce a new approach for implementing SFT with several contributions. 
\textbf{First}, our approach works over the recent line of BFT consensus algorithms~\cite{buchman2016tendermint,buterin2017casper,yin2019hotstuff,baudet2019state,chan2020streamlet} that bring the chaining spirit and simplicity of Nakamoto consensus to the permissioned arena. 
\textbf{Second}, our approach induces marginal bookkeeping overhead and retains the {\em linear message complexity} of state-of-art BFT protocols.
\textbf{Third}, we implement our approach over DiemBFT (also known as LibraBFT earlier)~\cite{baudet2019state}  -- a production version of HotStuff that provides an infrastructure for efficient and inter-operable financial services. Our evaluation demonstrates that SFT-DiemBFT adds moderate latency increase to obtaining higher assurance guarantees (up to two-thirds) under various conditions. 
We also demonstrate our approach on another recent chain-based BFT called Streamlet~\cite{chan2020streamlet} in Appendix~\ref{sec:streamlet}, which may be of independent interests. 

The rest of the paper is organized as follows.
Section~\ref{sec:preliminary} introduces the definitions, an abstract prototype of chain-based BFT, and an overview of the DiemBFT protocol.
Section~\ref{sec:diembft} presents the formal definition of strengthened fault tolerance, and our solution for implementing SFT in DiemBFT.
Experimental evaluations of the protocol are presented in Section~\ref{sec:evaluation}. More discussions can be found in Section~\ref{sec:discussion}.
Finally, we summarize related works in Section~\ref{sec:relatedwork} and conclude in Section~\ref{sec:conclusion}.

\section{Preliminaries}
\label{sec:preliminary}

We consider a permissioned system of $n$ replicas numbered $1, 2, \ldots, n$. 
A public-key infrastructure exists to certify each party's public key.
Some replicas are Byzantine with arbitrary behaviors and controlled by an adversary; the rest of the replicas are called honest.
The replicas have all-to-all reliable and authenticated communication channels.
The system is assumed to be partially synchronous, i.e., there is a known network delay bound $\Delta$ that will hold after an unknown Global Stabilization Time (GST). After GST, all messages between honest replicas will arrive within time $\Delta$.
Without loss of generality, we let $n=3f+1$ where $f$ denotes the assumed upper bound on the number of Byzantine faults, which is the optimal worst-case resilience bound under partial synchrony~\cite{dwork1988consensus}.
Throughout the paper we will use $t$ to denote the {\em actual number} of Byzantine faults in the system, which may be less than $f$ during the optimistic period.
This paper aims to achieve fault tolerance guarantees beyond one-third optimistically -- blocks can be committed with higher assurance if optimistic conditions are met, and they will be secure against more than $f$ Byzantine faults after the optimistic periods.

\paragraph{Cryptographic primitives.}
We assume standard digital signatures and public-key infrastructure (PKI). We use $\langle x\rangle_i$ to denote a signed message $x$ by replica $i$.
We also assume a cryptographic hash function $H(\cdot)$ that can map an input of arbitrary size to an output of fixed size. The hash function is collision-resistant, which means the probability of an adversary producing $m\neq m'$ such that $H(m)=H(m')$ is negligible. 

\paragraph{Byzantine Fault Tolerant State Machine Replication~\cite{Abraham2020SyncHS}.}
A Byzantine fault tolerant state machine replication protocol commits client transactions as a linearizable log akin to a single non-faulty server, and provides the following two guarantees.
\begin{itemize}[leftmargin=*,noitemsep,topsep=0pt]
    \item Safety. Honest replicas do not commit different transactions at the same log position. 
    \item Liveness. Each client transaction is eventually committed by all honest replicas.
\end{itemize}

\smallskip
Besides the two requirements above, a validated BFT SMR protocol also requires to satisfy the {\em external validity} which requires any committed transactions to be {\em externally valid}, satisfying some application-dependent predicate. This can be done by adding validity checks on the transactions before the replicas proposing or voting, and for brevity we omit the details and focus on the BFT SMR formulation defined above.
For most of the paper, we omit the client from the discussion and only focus on replicas.
The discussion on how to prove SFT to the clients is provided in Section~\ref{sec:discussion}.

\subsection{Chain-based BFT SMR}
\label{sec:preliminary:prototype}

In a chain-based BFT, transactions are batched into {\em blocks} with some predefined ordering, and the protocol builds a chain of blocks to achieve a total ordering of transactions. 
In this section, we present relevant definitions and a prototype that abstracts all existing chain-based BFT SMR protocols~\cite{buchman2016tendermint,buterin2017casper,yin2019hotstuff,baudet2019state,chan2020streamlet}.
First, we define several terminologies commonly used in the literature.

\begin{itemize}[leftmargin=*,noitemsep,topsep=0pt]
    \item {\bf Quorum Certificate.}
    A quorum certificate (QC) is a set of signed votes for a block by a quorum of replicas, from $n-f=2f+1$ (out of $n=3f+1$) distinct replicas.
    We say a block is certified if there exists a QC for the block. QCs or certified blocks are ranked higher with larger round numbers.

    \item {\bf Block Format.}
    The position of a block in the chain is called height.
    A block at height $k$ is formatted as follows $B_k=(H(B_{k-1}), qc, txn)$ where $H(B_{k-1})$ is the hash digest of its parent block $B_{k-1}$ (at height $k-1$), $qc$ is a quorum certificate of the parent block $B_{k-1}$, and $txn$ is a batch of new transactions.
    
    \item {\bf Block Chaining.}
    Blocks are chained by hash digest and quorum certificates. 
    We say a block $B_l$ extends another block $B_k$ ($l \geq k$), if $B_k$ is an ancestor of $B_l$ in the chain.
    We say two blocks $B_k$ and $B_l$ are conflicting if they do not extend one another. 
\end{itemize}

With the terminologies above, the safety and liveness requirement of the BFT SMR protocol can be redefined as follows.
For safety, no two honest replicas commit different blocks at the same height.
For liveness, all honest replicas keep committing new blocks containing the clients' transactions after GST.

\paragraph{A prototype for chain-based BFT SMR.}

\begin{figure*}[h!]
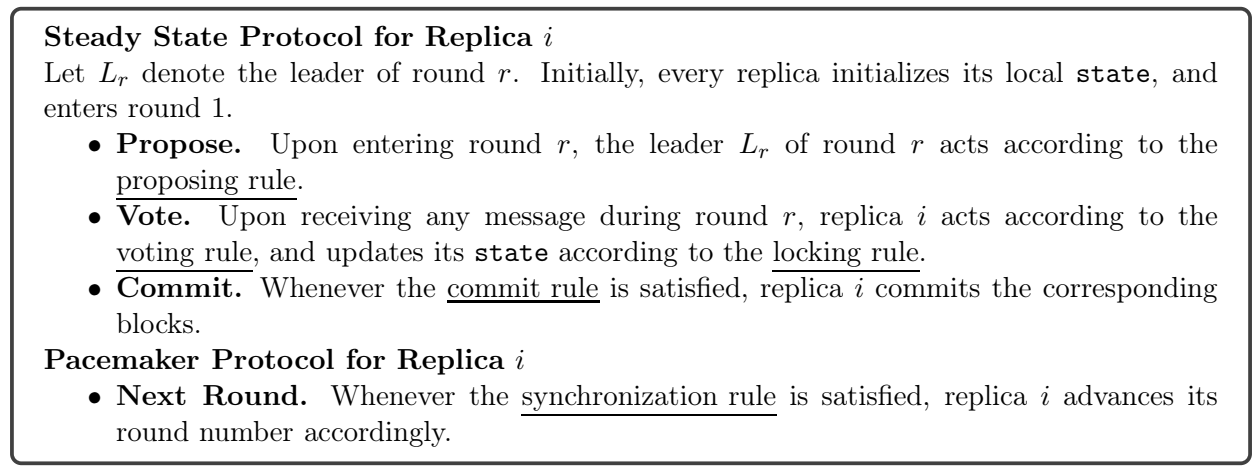

    \centering
    \begin{mybox}
\textbf{Steady State Protocol for Replica $i$}

Let $L_r$ denote the leader of round $r$. Initially, every replica initializes its local \state, and enters round $1$.
\begin{itemize}[noitemsep,topsep=0pt]
    \item\label{smr:propose} {\bf Propose.} 
    Upon entering round $r$,  the leader $L_r$ of round $r$ acts according to the \underline{proposing rule}.
    
    \item\label{smr:vote} {\bf Vote.} 
    Upon receiving any message during round $r$, replica $i$ acts according to the \underline{voting rule}, and updates its \state according to the \underline{locking rule}.
    
    \item\label{smr:commit} {\bf Commit.} 
    Whenever the \underline{commit rule} is satisfied, replica $i$ commits the corresponding blocks.
    
\end{itemize}
    
\textbf{Pacemaker Protocol for Replica $i$}
\begin{itemize}[noitemsep,topsep=0pt]
    \item\label{smr:nextround} {\bf Next Round.}
    Whenever the \underline{synchronization rule} is satisfied, replica $i$ advances its round number accordingly.
    
    \end{itemize}

    \end{mybox}

    \caption{Prototype for Chain-based BFT SMR.}
    \label{fig:prototype}
\end{figure*}

\begin{figure*}[h!]
    \centering
    \begin{mybox}
The DiemBFT protocol can be defined by the prototype in Figure~\ref{fig:prototype} with the following components. 

\begin{itemize}[noitemsep,topsep=0pt]
    \item \underline{\state.}
    Each replica locally keeps a highest voted round $r_{vote}$, a highest locked round $r_{lock}$, a current round $r_{cur}$ (all initialized to $0$), and a highest quorum certificate $qc_{high}$ (initialized to $\bot$ of round $0$).
    
    \item \underline{Proposing rule.}
    The leader multicasts a proposal $\langle \propose, B_k, r\rangle_{L_r}$ containing block $B_k=(H(B_{k-1}), qc_{high}, txn)$ extending the highest certified block $B_{k-1}$ (certified by $qc_{high}$) known to the leader.
    
    \item \underline{Voting rule.} 
    Upon receiving the first valid proposal $\langle \propose, B_k, r\rangle_{L_r}$ at round $r$,
    send a \vote message in the form of $\langle \vote, B_k, r \rangle_i$ to the next leader $L_{r+1}$, iff
    (1) $r>r_{vote}$, and
    (2) $B_{k-1}.round\geq r_{lock}$,
    where $B_{k-1}.round$ is the round number of $B_k$'s parent block $B_{k-1}$.
    
    \item \underline{Locking rule.}
    Upon receiving a valid block $B$, the replica updates $r_{vote}=\max(r_{vote},B.round)$ if it votes for $B$.
    Upon receiving a valid $qc$, let $B$ be the parent block of the block certified by $qc$, the replica updates $r_{lock} = \max( r_{lock}, B.round )$, and it also updates $qc_{high}=qc$ if $qc$ ranks higher than $qc_{high}$.
    
    \item \underline{Commit rule ($3$-chain rule).}
    The replica commits a block $B_k$ and all its ancestors, iff there exists three adjacent certified blocks $B_k,B_{k+1},B_{k+2}$ in the chain with consecutive round numbers, i.e.,
    $B_{k+2}.round=B_{k+1}.round+1=B_k.round+2$.
    
    \item \underline{Synchronization rule.}
    After receiving the QC ($2f+1$ distinct votes) of a round-$(r-1)$ block or $2f+1$ \timeout messages (explained below) of round $r-1$, update the current round $r_{cur}=\max(r_{cur}, r)$.
    
    \item \underline{Timeout.}
    Upon entering round $r$, the replica stops any timer or voting for round $<r$, and sets a timer $T_r$ for round $r$ to some predefined duration.
    Upon the timer $T_r$ expires, the replica stops voting for round $r$ and multicasts a \timeout message $\langle \timeout, r, qc_{high} \rangle_i$ of round $r$.
\end{itemize}

    \end{mybox}

    \caption{DiemBFT Protocol.}
    \label{fig:diembft}
\end{figure*}

Now we present a prototype that abstracts the existing protocols~\cite{buchman2016tendermint,buterin2017casper,yin2019hotstuff,baudet2019state,chan2020streamlet}, as shown in Figure~\ref{fig:prototype}.
The goal here is to help readers get a high-level picture of the protocol rather than formalizing a universal framework for chain-based BFT SMR.
The protocol executes in rounds $r=1,2,\dots$, each with a designated replica as the round leader. This paper assumes a round-robin rotation for leader elections, which is also assumed in \cite{yin2019hotstuff,baudet2019state,chan2020streamlet}. 
The protocol consists of two components, a Steady State protocol that aims to make progress when the round leader is honest, and a Pacemaker protocol that advances the round number either due to the lack of progress or the current round being completed. 

The Steady State protocol at each replica involves several rounds of message exchanges. First, the current round leader executes according to the protocol-specific \underline{proposing rule}, such as proposing a block.
When replicas receive the leader's proposal, they will take actions according to the protocol-specific \underline{voting rule}, most often sending vote messages to the leader or other replicas. The replicas also update their local \state according to the \underline{locking rule}.
When any replica observes that certain conditions described by the \underline{commit rule} are satisfied, they commit the corresponding blocks (and all ancestor blocks due to chaining).
The Pacemaker protocol at each replica is responsible for advancing the round numbers according to the protocol-specific \underline{synchronization rule}. 

State-of-art chain-based BFT protocols like DiemBFT and Streamlet can be instantiated by specifying the \state and the missing protocol rules, as will be described in Section~\ref{sec:diembft:overview} and Appendix~\ref{sec:streamlet:overview}.

\begin{figure*}[h!]
    \centering

\resizebox{\textwidth}{!}{

\tikzset{every picture/.style={line width=0.75pt}} 

\begin{tikzpicture}[x=0.75pt,y=0.75pt,yscale=-1,xscale=1]

\draw   (331,80.67) -- (371.2,80.67) -- (371.2,119.6) -- (331,119.6) -- cycle ;
\draw   (411,80.67) -- (451.2,80.67) -- (451.2,119.6) -- (411,119.6) -- cycle ;
\draw   (491,80.67) -- (531.2,80.67) -- (531.2,119.6) -- (491,119.6) -- cycle ;
\draw   (571,80.67) -- (611.2,80.67) -- (611.2,119.6) -- (571,119.6) -- cycle ;
\draw  [fill={rgb, 255:red, 0; green, 0; blue, 0 }  ,fill opacity=1 ] (391,93) -- (396,100) -- (391,107) -- (386,100) -- cycle ;
\draw    (371.5,100) -- (411.5,100) ;
\draw  [fill={rgb, 255:red, 0; green, 0; blue, 0 }  ,fill opacity=1 ] (471,93) -- (476,100) -- (471,107) -- (466,100) -- cycle ;
\draw    (451.5,100) -- (491.5,100) ;
\draw  [fill={rgb, 255:red, 0; green, 0; blue, 0 }  ,fill opacity=1 ] (551,93) -- (556,100) -- (551,107) -- (546,100) -- cycle ;
\draw    (531.5,100) -- (570.5,100) ;
\draw    (550.5,86) -- (550.5,58) ;
\draw [shift={(550.5,56)}, rotate = 450] [color={rgb, 255:red, 0; green, 0; blue, 0 }  ][line width=0.75]    (10.93,-3.29) .. controls (6.95,-1.4) and (3.31,-0.3) .. (0,0) .. controls (3.31,0.3) and (6.95,1.4) .. (10.93,3.29)   ;
\draw   (19,81.67) -- (59.2,81.67) -- (59.2,120.6) -- (19,120.6) -- cycle ;
\draw   (99,81.67) -- (139.2,81.67) -- (139.2,120.6) -- (99,120.6) -- cycle ;
\draw   (179,81.67) -- (219.2,81.67) -- (219.2,120.6) -- (179,120.6) -- cycle ;
\draw  [fill={rgb, 255:red, 0; green, 0; blue, 0 }  ,fill opacity=1 ] (79,94) -- (84,101) -- (79,108) -- (74,101) -- cycle ;
\draw    (59.5,101) -- (99.5,101) ;
\draw  [fill={rgb, 255:red, 0; green, 0; blue, 0 }  ,fill opacity=1 ] (159,94) -- (164,101) -- (159,108) -- (154,101) -- cycle ;
\draw    (139.5,101) -- (179.5,101) ;
\draw    (158.5,82) -- (158.5,54) ;
\draw [shift={(158.5,52)}, rotate = 450] [color={rgb, 255:red, 0; green, 0; blue, 0 }  ][line width=0.75]    (10.93,-3.29) .. controls (6.95,-1.4) and (3.31,-0.3) .. (0,0) .. controls (3.31,0.3) and (6.95,1.4) .. (10.93,3.29)   ;

\draw (351.1,100.13) node  [font=\large]  {$B_{k}$};
\draw (431.1,100.13) node  [font=\large]  {$B_{k+1}$};
\draw (511.1,100.13) node  [font=\large]  {$B_{k+2}$};
\draw (591.1,100.13) node  [font=\large]  {$B_{k+3}$};
\draw (378,122) node [anchor=north west][inner sep=0.75pt]  [font=\large] [align=left] {$\displaystyle QC_{k}$};
\draw (454,122) node [anchor=north west][inner sep=0.75pt]  [font=\large] [align=left] {$\displaystyle QC_{k+1}$};
\draw (535,122) node [anchor=north west][inner sep=0.75pt]  [font=\large] [align=left] {$\displaystyle QC_{k+2}$};
\draw (440,25) node [anchor=north west][inner sep=0.75pt]  [font=\large] [align=left] {Commit $\displaystyle B_{k}$ when receiving $\displaystyle QC_{k+2}$.};
\draw (323,151) node [anchor=north west][inner sep=0.75pt]  [font=\large] [align=left] {round $\displaystyle r$};
\draw (390,151) node [anchor=north west][inner sep=0.75pt]  [font=\large] [align=left] {round $\displaystyle r+1$};
\draw (480,151) node [anchor=north west][inner sep=0.75pt]  [font=\large] [align=left] {round $\displaystyle r+2$};
\draw (39.1,101.13) node  [font=\large]  {$B_{k-2}$};
\draw (119.1,101.13) node  [font=\large]  {$B_{k-1}$};
\draw (199.1,101.13) node  [font=\large]  {$B_{k}$};
\draw (63,123) node [anchor=north west][inner sep=0.75pt]  [font=\large] [align=left] {$\displaystyle QC_{k-2}$};
\draw (143,123) node [anchor=north west][inner sep=0.75pt]  [font=\large] [align=left] {$\displaystyle QC_{k-1}$};
\draw (0,25) node [anchor=north west][inner sep=0.75pt]  [font=\large] [align=left] {Update $\displaystyle r_{lock} =max( r_{lock} ,r')$ when receiving $\displaystyle QC_{k-1}$.};
\draw (11,152) node [anchor=north west][inner sep=0.75pt]  [font=\large] [align=left] {round $\displaystyle r'$};
\draw (173,152) node [anchor=north west][inner sep=0.75pt]  [font=\large] [align=left] {round $\displaystyle r$};
\draw (50,181) node [anchor=north west][inner sep=0.75pt]  [font=\large] [align=left] {$2$-chain locking rule};
\draw (400,180) node [anchor=north west][inner sep=0.75pt]  [font=\large] [align=left] {$3$-chain commit rule};

\end{tikzpicture}
    
}
    
    \caption{Illustration of DiemBFT.}
    \label{fig:diemfbt:illustration}
\end{figure*}
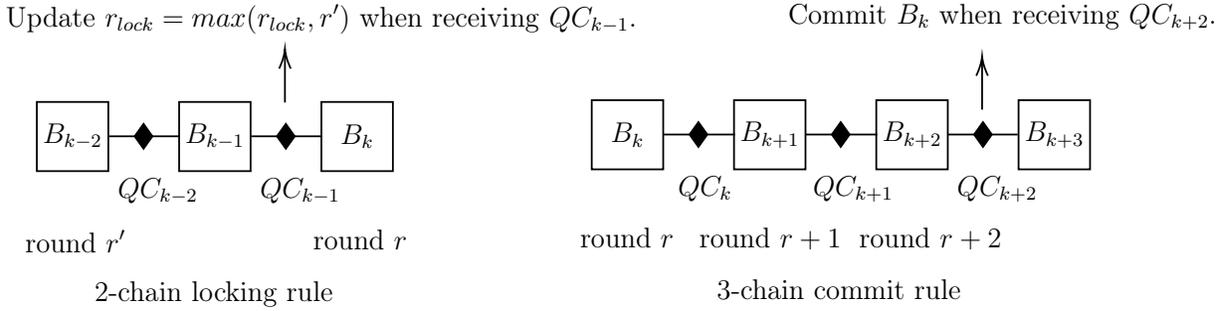

\subsection{Overview of DiemBFT}
\label{sec:diembft:overview}
The DiemBFT protocol (known as LibraBFT earlier)~\cite{baudet2019state} is a production version of HotStuff~\cite{yin2019hotstuff} with an open-source implementation~\cite{diem}.
DiemBFT provides responsiveness (advance with network speed), amortized linear communication complexity per block decision and amortized constant round commit latency under synchrony.
Given the abstract prototype in Figure~\ref{fig:prototype}, we can present the DiemBFT protocol by specifying the missing components, as shown in Figure~\ref{fig:diembft}.
The leader always proposes a block extending the highest certified block known to the leader. When receiving the leader's proposal, replicas send a \vote message containing the block (in practice only the hash digest of the block) to the next leader, if the proposal's round number is larger than its highest voted round number $r_{vote}$ and the proposal extends a block of a round number $\geq r_{lock}$.
The local variables $r_{vote}$, $r_{lock}$ and the highest QC will be updated whenever the replica receives a new block or a new QC (from the proposal blocks or \timeout messages). 
When the replica observes a $3$-chain that contains three certified blocks with consecutive round numbers, it commits the first block in the $3$-chain and all ancestor blocks (this is known as the $3$-chain rule from HotStuff~\cite{yin2019hotstuff}).
Replicas send \timeout messages containing the highest QC $qc_{high}$ if they detect the current round leader is not making progress.
The replica $i$ 
will advance the current round number either when a new QC is received (contained in proposal or formed after collecting \vote messages) or when $2f+1$ \timeout messages are collected. 
More details on pacemaker and synchronization in BFT protocols can be found in \cite{baudet2019state,naor2019cogsworth,naor2020expected}.

\section{DiemBFT with Strengthened Fault Tolerance}
\label{sec:diembft}
We propose a chain-based BFT SMR protocol that provides strengthened fault tolerance based on the DiemBFT protocol~\cite{baudet2019state}.
We first formally define strengthened fault tolerance in Section~\ref{sec:preliminary:definition}, and then present our solution for implementing strengthened fault tolerance called SFT-DiemBFT in Section~\ref{sec:diembft:oft}.
Our SFT-DiemBFT is the first solution that achieves SFT while keeping the protocol responsive and linear. In constrast, adapting FBFT~\cite{malkhi2019flexible} to DiemBFT to obtain SFT incurs a quadratic message complexity overhead per block decision, as will be explained in Section~\ref{sec:diembft:oft}.

\subsection{Strengthened Fault Tolerance}
\label{sec:preliminary:definition}
Strengthened fault tolerance aims at providing higher assurance on blocks, i.e., only one block is committed with the said higher assurance at every height of the chain even if the adversary corrupts more than $f$ replicas in the future.
More specifically, a strengthened fault tolerant protocol $\Pi_{\text{sft}}$ defines a set of {\em strong commit} rules, that can commit blocks with different levels of fault tolerance guarantees.
A block is {\em $x$-strong committed} if the commit is secure against up to $x$ Byzantine faults, i.e., no other conflicting blocks can be $x$-strong committed if the number of Byzantine faults does not exceed $x$.
In contrast, traditional BFT protocols have a single commit rule that is $f$-strong (i.e., tolerating $f$ faults), which we refer to as the {\em regular commit} rule.
We will present SFT solutions for DiemBFT and Streamlet where blocks can be $x$-strong committed with $x\in [f,2f]$. 

The safety of an SFT protocol is formally defined as follows.

\begin{definition}[Safety]
    A strengthened fault tolerant protocol $\Pi_{\text{sft}}$ with a set of $x$-strong commit rules is safe, if it satisfies the following: 
    For any $x$-strong committed block $B$, no other conflicting block $B'$ can be $x'$-strong committed where $x'\geq x$ in the presence of $t\leq x$ Byzantine faults. 
\end{definition}

For instance, if a block $B$ is $(f+1)$-strong committed, even if the adversary corrupts $f+1$ replicas, no conflicting block can be $x$-strong committed for any $x\geq f+1$. However, the $(f+1)$-strong commit of $B$ does not rule out a conflicting $f$-strong committed block $B'$, as the assumption of the $f$-strong commit is violated (number of faults is $f+1$) and the $f$-strong guarantee for $B'$ does not hold anymore.

These strong commits (for strength $x>f$) are optimistic in the sense that they only occur during optimistic periods with fewer faults. 
Hence, we also modify the {\em liveness} requirement to ensure that a block will be strong committed during the optimistic period.
Throughout the paper, we will use $t$ to denote the actual number of Byzantine faults in the system during the optimistic period.

\begin{definition}[Liveness]
    An strengthened fault tolerant protocol $\Pi_{\text{sft}}$ is live, if during an optimistic period of time with $t\leq f$ Byzantine faults, a block will be $(2f-t)$-strong committed.
\end{definition}

Note that in the liveness definition above, the definition {\em optimistic period} can be protocol-specific.
In Section~\ref{sec:diembft:oft} we first present a simple solution where the optimistic period requires benign (crash) faults (Theorem~\ref{thm:liveness}).
Note that if the optimistic period has no failures, blocks will be $2f$-strong committed, thereby tolerating up to $2f$ Byzantine faults in the future.
In Section~\ref{sec:discussion:strongvote}, we extend the solution to a stronger liveness guarantee where the optimistic period allows Byzantine faults (Theorem~\ref{thm:newliveness}) with some added overhead. 

\subsection{SFT-DiemBFT}
\label{sec:diembft:oft}

A natural attempt to obtain SFT is to simply exchange more votes among replicas, i.e., increasing the QC size of the $3$-chain rule.
This is the approach taken by the FBFT protocol~\cite{malkhi2019flexible}.
FBFT is built on the classic PBFT protocol, which is not chain-based and has quadratic message complexity.
While it can be adapted to be chain-based (more details can be found in Appendix~\ref{sec:fbft}), we do not see a way to avoid its quadratic message complexity.
The reason is that a leader of a round cannot wait indefinitely for a QC of size larger than $2f+1$, since that will break liveness in the presence of $f$ Byzantine faults. 
Thus, replicas need to advance to the next round with a standard quorum size of $2f+1$, and any extra votes from earlier rounds need to be multicasted later by the leader of that round.
Yet, the leader did not know how many extra votes can be collected, and hence cannot wait indefinitely for any predefined number of extra votes either.
This means the leader needs to multicast up to $f$ extra votes one by one as they arrive in the worst case, leading to an amortized message complexity of $O(fn)=O(n^2)$ per round/decision.

Our solution is inspired by the Nakamoto consensus where successor blocks in the chain extension increase the assurance of earlier blocks.
We say a vote message for block $B$ is a \emph{direct} vote for $B$, and is an \emph{indirect} vote for any ancestor block of $B$.
We make a similar observation in the chain-based BFT SMR that {\em indirect votes in the QCs of a block's extension blocks can help increase resilience}. 

However, simply counting all the indirect votes towards the resilience may overcalculate the actual fault tolerance.
Intuitively, if a replica had previously voted for blocks in another fork, and then voted for a block $B$ in the current branch, the vote for $B$ may not enhance the resilience of all the ancestor blocks in this branch since the replica already contributed to a conflicting fork. We give a concrete illustrating example in Appendix~\ref{sec:diembft:quorumdiversity}.

Thus, to ensure correctness, SFT-DiemBFT identifies the set of indirect votes that actually increase the assurance of a block, by including some additional information in each vote message that summarizes a replica's voting history.
We first present a simpler solution that additionally attaches just one value called \marker in the vote message, but with a relatively weak liveness guarantee on the strong commits. Later in Section~\ref{sec:discussion:strongvote}, we show how SFT-DiemBFT can provide better liveness guarantees by including more information in the votes.
Such a vote message with additional information attached is called a \strongvote.

\begin{figure*}[t!]
    \centering
    \begin{mybox}
The SFT-DiemBFT protocol modifies the original DiemBFT protocol in Figure~\ref{fig:diembft} with the following changes.

\begin{itemize}
    \item {\bf Changes to local \state.}
    For every fork in the blockchain, the replica additionally keeps the highest voted block on that fork.
    
    \item {\bf Strong-vote and strong-QC.}
    For a \vote message of any block $B$, the replica additionally includes a $\marker=\max\{B'.round ~|~ \text{$B'$ conflicts $B$ and replica voted for $B'$}\}$ in the \vote.
    Therefore, the \vote message becomes $\langle \vote, B, r, \marker \rangle_i$, and we refer the \vote message defined above as \strongvote.
    Each block now contains a \strongqc consisting of $2f+1$ distinct signed \strongvote message of its parent block.

    \item {\bf Endorsements.}
    A \strongvote $\langle \vote, B', r', \marker \rangle_i$ endorses a round-$r$ block $B$, if $B=B'$, or $B'$ extends $B$ and $\marker<r$.
    We say a replica is an endorser of block $B$, if its \strongvote endorses $B$.

    \item {\bf Strong commit rule (strong $3$-chain rule).}
    The replica $x$-strong commits a block $B_k$ and all its ancestors, if and only if there exists three adjacent blocks $B_k,B_{k+1},B_{k+2}$ in the chain with consecutive round numbers, 
    and $B_k,B_{k+1},B_{k+2}$ all have \underline{at least $x+f+1$ endorsers}.
\end{itemize}

    \end{mybox}
    \caption{Strengthened Fault Tolerance for DiemBFT.}
    \label{fig:oft-diembft}
\end{figure*}

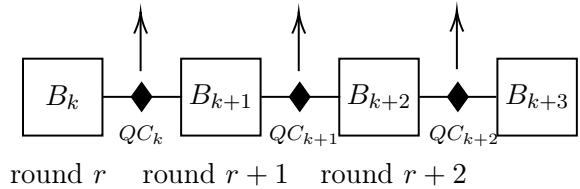
\begin{figure}[t!]
    \centering

\resizebox{.75\textwidth}{!}{

\tikzset{every picture/.style={line width=0.75pt}} 

\begin{tikzpicture}[x=0.75pt,y=0.75pt,yscale=-1,xscale=1]

\draw   (193,92.67) -- (233.2,92.67) -- (233.2,131.6) -- (193,131.6) -- cycle ;
\draw   (273,92.67) -- (313.2,92.67) -- (313.2,131.6) -- (273,131.6) -- cycle ;
\draw   (353,92.67) -- (393.2,92.67) -- (393.2,131.6) -- (353,131.6) -- cycle ;
\draw   (433,92.67) -- (473.2,92.67) -- (473.2,131.6) -- (433,131.6) -- cycle ;
\draw  [fill={rgb, 255:red, 0; green, 0; blue, 0 }  ,fill opacity=1 ] (253,105) -- (258,112) -- (253,119) -- (248,112) -- cycle ;
\draw    (233.5,112) -- (273.5,112) ;
\draw  [fill={rgb, 255:red, 0; green, 0; blue, 0 }  ,fill opacity=1 ] (333,105) -- (338,112) -- (333,119) -- (328,112) -- cycle ;
\draw    (313.5,112) -- (353.5,112) ;
\draw  [fill={rgb, 255:red, 0; green, 0; blue, 0 }  ,fill opacity=1 ] (413,105) -- (418,112) -- (413,119) -- (408,112) -- cycle ;
\draw    (393.5,112) -- (432.5,112) ;
\draw    (412.5,98) -- (412.5,70) ;
\draw [shift={(412.5,68)}, rotate = 450] [color={rgb, 255:red, 0; green, 0; blue, 0 }  ][line width=0.75]    (10.93,-3.29) .. controls (6.95,-1.4) and (3.31,-0.3) .. (0,0) .. controls (3.31,0.3) and (6.95,1.4) .. (10.93,3.29)   ;
\draw    (332.5,99) -- (332.5,71) ;
\draw [shift={(332.5,69)}, rotate = 450] [color={rgb, 255:red, 0; green, 0; blue, 0 }  ][line width=0.75]    (10.93,-3.29) .. controls (6.95,-1.4) and (3.31,-0.3) .. (0,0) .. controls (3.31,0.3) and (6.95,1.4) .. (10.93,3.29)   ;
\draw    (252.5,99) -- (252.5,71) ;
\draw [shift={(252.5,69)}, rotate = 450] [color={rgb, 255:red, 0; green, 0; blue, 0 }  ][line width=0.75]    (10.93,-3.29) .. controls (6.95,-1.4) and (3.31,-0.3) .. (0,0) .. controls (3.31,0.3) and (6.95,1.4) .. (10.93,3.29)   ;

\draw (213.1,112.13) node    {$B_{k}$};
\draw (293.1,112.13) node    {$B_{k+1}$};
\draw (373.1,112.13) node    {$B_{k+2}$};
\draw (453.1,112.13) node    {$B_{k+3}$};
\draw (240,125) node [anchor=north west][inner sep=0.75pt]  [font=\scriptsize] [align=left] {$\displaystyle QC_{k}$};
\draw (316,125) node [anchor=north west][inner sep=0.75pt]  [font=\scriptsize] [align=left] {$\displaystyle QC_{k+1}$};
\draw (397,125) node [anchor=north west][inner sep=0.75pt]  [font=\scriptsize] [align=left] {$\displaystyle QC_{k+2}$};
\draw (132,41) node [anchor=north west][inner sep=0.75pt]   [align=left] {$\displaystyle x$-strong commit $\displaystyle B_{k}$ when $\displaystyle B_{k} ,B_{k+1} ,B_{k+2}$ all have $\displaystyle x+f+1$ endorsers };
\draw (185,144) node [anchor=north west][inner sep=0.75pt]   [align=left] {round $\displaystyle r$};
\draw (252,144) node [anchor=north west][inner sep=0.75pt]   [align=left] {round $\displaystyle r+1$};
\draw (342,144) node [anchor=north west][inner sep=0.75pt]   [align=left] {round $\displaystyle r+2$};

\end{tikzpicture}
    
}
    
    \caption{Illustration for Strong Commit}
    \label{fig:diembft:strongcommit}
\end{figure}

\paragraph{Protocol Description.}
The protocol SFT-DiemBFT is presented in Figure~\ref{fig:oft-diembft}. 
As mentioned, the first step is to strengthen the \vote messages for block $B$ by attaching a \marker, which equals the largest round number of any block that voted by the replica and conflicts with $B$.
The default value of \marker is $0$ if the genesis block has round $1$.
A \strongvote with a \marker for block $B'$ can ``vote for'' block $B$, if $B=B'$ or the round number of $B$ is larger than the value of $\marker$ and $B'$ extends $B$.
We say the above \strongvote endorses the block $B$, and the sender of the \strongvote is an {\em endorser} of the block $B$.
To strong commit a block $B$ with higher confidence, the SFT-DiemBFT protocol uses a {\em strong $3$-chain rule} as the strong commit rule, which requires every block in the $3$-chain to have a set of $x+f+1$ endorsers, instead of $2f+1$ direct votes from a QC in the regular commit rule.
Note that the regular commit rule is a special case of the strong commit rule with $x=f$, i.e., three consecutive blocks each with $2f+1$ endorsers result in a tolerance of the regular commit rule is $2f+1-f-1=f$.

In SFT-DiemBFT, a block $B$ can be strong committed with  increased fault tolerance guarantees as the number of endorsers of every block in the $3$-chain increases. 
As the chain extending $B$ grows, there may be more \strongqc containing \strongvote that endorses $B$, and thus the number of endorsers of $B$ increases. 
Hence the level of resilience of $B$ is likely to increase as the chain extending $B$ grows longer.

In terms of efficiency, SFT-DiemBFT only induces marginal overhead over the original DiemBFT protocol -- only one \marker included in the votes.
Therefore, SFT-DiemBFT maintains all the desirable properties of DiemBFT, including responsiveness (protocol advances in network speed), amortized linear message complexity per block decision and amortized constant round commit latency during synchronous periods.
Our experiments in Section~\ref{sec:evaluation} confirm that the throughput and regular commit latency of SFT-DiemBFT are mostly identical to DiemBFT; in addition, blocks obtain higher resilience gradually during optimistic periods.

\subsection{Proof of Correctness}
\label{sec:diembft:proof}

We use $\fc(B)$ to denote the set of replicas that certifies $B$, and hence $|\fc(B)|=2f+1$.

\begin{lemma}\label{lem:1}
    If a block $B$ with round number $r$ has $E$ endorsers, then no other conflicting block with round number $r$ can be certified under $t\leq E-f-1$ Byzantine replicas.
\end{lemma}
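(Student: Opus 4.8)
The plan is to argue by a quorum-intersection (counting) argument. Suppose for contradiction that both a round-$r$ block $B$ has $E$ endorsers and a conflicting round-$r$ block $B'$ is certified, under $t \le E-f-1$ Byzantine faults. Being certified, $B'$ has a QC, so $|\fc(B')| = 2f+1$, i.e.\ $2f+1$ replicas signed a \strongvote for $B'$ at round $r$ (or, via the endorsement definition, for a descendant of $B'$ with small enough \marker — but let me first handle the cleanest case where $B'$ itself is directly voted). Let $\fe(B)$ denote the set of $E$ endorsers of $B$. First I would establish that $\fe(B)$ and $\fc(B')$ can share only Byzantine replicas: an honest replica votes at most once per round (the voting rule enforces $r > r_{vote}$ before voting and then sets $r_{vote} = \max(r_{vote}, B.round)$), so an honest replica that cast a round-$r$ \vote did so for exactly one block; it cannot have a round-$r$ vote counting toward both $B$ and the conflicting $B'$.

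The subtlety is that an endorser of $B$ need not have cast a \emph{round-$r$} vote — it may have voted for a descendant $B''$ of $B$ with $B''.round > r$ and $\marker < r$. So the second step is to show that even such an endorser, if honest, could not also be in $\fc(B')$. Here the \marker does the work: if an honest replica endorses $B$ via a \strongvote for a descendant $B''$ with $\marker < r$, then by definition of \marker the replica has not voted for any block conflicting with $B''$ at round $\ge r$ — in particular it did not vote for $B'$ (which conflicts with $B''$, since $B'$ conflicts with $B$ and $B''$ extends $B$) at round $r \ge r$. Hence an honest endorser of $B$ is never a (direct) voter for $B'$. Therefore $|\fe(B) \cap \fc(B')|$ consists only of Byzantine replicas, so $|\fe(B) \cap \fc(B')| \le t$.

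Now the counting: $|\fe(B) \cup \fc(B')| = |\fe(B)| + |\fc(B')| - |\fe(B) \cap \fc(B')| \ge E + (2f+1) - t$. Since $t \le E - f - 1$, this is $\ge E + 2f + 1 - (E-f-1) = 3f+2 > n = 3f+1$, a contradiction. This finishes the argument.

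The main obstacle I anticipate is pinning down exactly what "$B'$ is certified" buys us when endorsement is defined through descendants and \marker{}s: one has to be careful that the $2f+1$ signers of $B'$'s QC really are $2f+1$ distinct replicas each of whom cast a \emph{direct} round-$r$ \strongvote for $B'$ (this is just the QC/\strongqc definition, so it should be fine), and, more delicately, that the \marker semantics are strong enough to preclude an honest endorser-via-descendant of $B$ from having also directly voted $B'$ at round $r$. The second point is precisely the purpose of attaching the \marker, so the lemma should go through; the only real care needed is to invoke the honest voting rule ("$r > r_{vote}$") and the \marker definition in the right combination, and to make sure the edge case $B'' = B$ (direct endorser) is subsumed. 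If for some reason the \marker bound is not quite tight enough for descendants at the \emph{same} round $r$ (it should not be an issue, since all the blocks in question have round exactly $r$ and descendants have strictly larger round), I would fall back to noting that $B$ and $B'$ both having round $r$ already forces any honest replica contributing a round-$r$ signature to pick one side.
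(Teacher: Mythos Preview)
Your proposal is correct and is essentially the same argument as the paper's: show that the endorser set of $B$ and the certifying set $\fc(B')$ can overlap only in Byzantine replicas (using the one-vote-per-round rule for direct endorsers and the \marker{} semantics for indirect endorsers), then derive a contradiction by inclusion--exclusion since $E + (2f+1) - n = E - f > t$. The only cosmetic difference is direction: the paper starts from an honest member of $\fc(B')$ and argues its later \marker{} would be $\ge r$ so it cannot endorse $B$, whereas you start from an honest endorser of $B$ and argue it never cast a round-$r$ vote for $B'$; these are contrapositives of the same claim.
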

\begin{proof}
    Suppose for the sake of contradiction, let block $B'$ be a conflicting certified block with  round number $r$. 

    The honest replicas in $\fc(B')$ will not vote for $B$, since honest replicas only vote for one block at each round.
    The replicas in $\fc(B')$ may vote for blocks extending $B$, as they are locked on some round number $<r$, and are able to vote for any blocks of round number $> r$ extending $B$ of round number $r$.
    However, by the definition of endorse, the honest replicas in $\fc(B')$ do not endorse block $B$, since they voted for $B'$ and have $\marker\geq r$. 
    Therefore, the intersection of $\fc(B')$ and the set of endorsers of $B$ contains no honest replicas, but only Byzantine replicas.
    Since the size of the intersection is at least $E+|\fc(B')|-n=E-f$, there should be at least $E-f$ Byzantine replicas, which contradicts the assumptions that there are at most $t\leq E-f-1$ Byzantine replicas.
    Therefore, any conflicting block $B'$ cannot be certified at round $r$.
\end{proof}

\begin{lemma}\label{lem:2}
    If a block $B$ of round number $r$ is $x$-strong committed at some honest replica by the strong commit rule and the number of Byzantine faults is $\leq x$, then any certified block with round number $\geq r$ must extend $B$.
\end{lemma}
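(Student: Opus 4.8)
The plan is to mirror the classical three-chain safety proof of HotStuff/DiemBFT, but to run every quorum-intersection count over \emph{endorsers} instead of over $2f+1$ direct voters; this is what keeps an honest replica in the intersection even when the actual fault count $t$ exceeds $f$. Throughout, write $r,r+1,r+2$ for the rounds of the three-chain $B=B_k,B_{k+1},B_{k+2}$ that witnesses the $x$-strong commit, so each of $B_k,B_{k+1},B_{k+2}$ has at least $x+f+1$ endorsers, and we are given $t\le x$.

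First I would dispatch the small rounds. Applying Lemma~\ref{lem:1} to each $B_{k+j}$ (which has $E\ge x+f+1$ endorsers, so $t\le x = (x+f+1)-f-1\le E-f-1$) rules out any block conflicting $B_{k+j}$ being certified at round $r+j$; since two distinct blocks of the same round necessarily conflict, $B_{k+j}$ is the unique certified block of round $r+j$, which gives the claim for rounds $r,r+1,r+2$. Then I would induct on the round $r'$: suppose every certified block of round in $[r,r')$ extends $B$, take $r'\ge r+3$, and let $B'$ be certified at round $r'$ with parent $B''$ of round $r''<r'$. If $r''\ge r$, the induction hypothesis forces $B''$ (hence $B'$) to extend $B$. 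So assume $r''<r$; then no ancestor of $B'$ has round $r$, so $B'$ does not extend $B$, and $B$ does not extend $B'$ since $r<r'$, hence $B'$ conflicts $B$. Intersecting the $2f+1$ replicas in $B'$'s \strongqc with the $\ge x+f+1$ endorsers of $B_{k+2}$ yields an intersection of size $\ge (2f+1)+(x+f+1)-(3f+1) = x+1 > t$, so it contains an honest replica $h$.

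The crux is to contradict the existence of $h$. By construction $h$ cast a \strongvote for $B'$ (round $r'\ge r+3$) at some time $\tau_1$, and a \strongvote $sv$ for some block $\hat B$ that is $B_{k+2}$ or extends $B_{k+2}$ (so that $sv$ endorses $B_{k+2}$) at some time $\tau_2\ne\tau_1$; note that since $\hat B$ extends $B_k$ while $B'$ conflicts $B_k$, one checks $B'$ conflicts $\hat B$. I would split on the order of the two votes. If $\tau_2<\tau_1$: casting $sv$ makes $h$ process the quorum certificate carried in $\hat B$'s proposal, and the parent of the block certified by that QC has round $\ge r$, so by the locking rule $r_{lock}\ge r$ from $\tau_2$ on; but casting the \strongvote for $B'$ at $\tau_1$ requires $B''.round = r''\ge r_{lock}\ge r$, contradicting $r''<r$. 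If $\tau_1<\tau_2$ and $\hat B = B_{k+2}$: casting a vote for $B_{k+2}$ (round $r+2$) at $\tau_2$ requires $r+2 > r_{vote}$ then, but $h$ already voted for $B'$, so $r_{vote}\ge r'\ge r+3$, a contradiction. If $\tau_1<\tau_2$ and $\hat B\ne B_{k+2}$: here $sv$ can endorse $B_{k+2}$ only because its \marker is $<r+2$; yet $B'$ conflicts $\hat B$ and $h$ voted for $B'$ before casting $sv$, so $sv$'s \marker is at least $B'.round = r'\ge r+3 > r+2$, again a contradiction. Hence no certified $B'$ of round $\ge r$ can fail to extend $B$.

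The step I expect to be the main obstacle is this last case analysis: pairing each ordering of $\tau_1,\tau_2$ and each shape of $\hat B$ with the right invariant (monotonicity of $r_{vote}$ together with voting condition (1), monotonicity of $r_{lock}$ together with voting condition (2), or the \marker/endorsement definition), and especially identifying precisely which quorum certificate $h$ processes upon casting $sv$ so that the induced lower bound on $r_{lock}$ is provably at least $r$. The quorum-intersection count and the Lemma~\ref{lem:1}-based base case should be routine once the endorser bookkeeping is in place.
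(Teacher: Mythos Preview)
Your proposal is correct and follows essentially the same route as the paper's proof. Both arguments handle rounds $r,r+1,r+2$ via Lemma~\ref{lem:1}, then for $r'\ge r+3$ intersect $\fc(B')$ with the endorsers of $B_{k+2}$ to locate an honest replica $h$, and derive a contradiction by splitting on whether $h$'s endorsing vote came before or after its vote for $B'$ (locking rule in the former, \marker bound in the latter). The only cosmetic differences are that the paper takes a minimal counterexample where you induct on $r'$, and the paper organizes the case split by the round of $h$'s endorsing vote rather than by time order; since honest replicas vote in strictly increasing rounds these are equivalent, and your extra subcase $\tau_1<\tau_2$, $\hat B=B_{k+2}$ is in fact vacuous (it collapses into the $\tau_2<\tau_1$ branch), which you correctly dispose of via $r_{vote}$.
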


\begin{proof}
    Since block $B$ is $x$-strong committed, there exists three blocks $B=B_k,B_{k+1},B_{k+2}$ with consecutive round numbers $r,r+1,r+2$, and each of the block $B_k,B_{k+1},B_{k+2}$ has $\geq x+f+1$ endorsers.
    Suppose the lemma is not true for the sake of contradiction, and let block $B'$ be a conflicting certified block with the smallest round number $r'\geq r$.
    Then, the parent block of $B'$ has round number $<r$, and does not extend $B_k$.
    Consider the following cases.
    \begin{itemize}
        \item When $r\leq r'\leq r+2$.  
        By Lemma \ref{lem:1}, for $i=0,1,2$, since $B_{k+i}$ has $x+f+1$ endorsers, no conflicting block with round number $r+i$ can be certified under $t\leq x$ Byzantine replicas.
        
        \item When $r'\geq r+3$. 
        The honest replicas in the set $\fc(B_{k+2})$ lock on round number $r$ after voting for $B_{k+2}$.
        Thus, they will not vote for block $B'$ since the parent block of $B'$ has round number $<r$.
        Let $\mathcal{B}$ denote the set of blocks of round number $\leq r'$ that extends $B_{k+2}$.
        Then the honest replicas in $\bigcup_{B_j\in\mathcal{B}} \fc(B_j)$ do not vote for $B'$ since they are locked on round number $\geq r$.
        Replicas in $\fc(B')$ may vote for blocks of round number $>r'$ that extend $B_{k+2}$.
        However, by the definition of endorse, the honest replicas in $\fc(B')$ do not endorse block $B_{k+2}$, since they voted for $B'$ and have $\marker\geq r+2$.
        Therefore, the intersection of $\fc(B')$ and the set of endorsers of $B_{k+2}$ contains no honest replicas, but only Byzantine replicas. 
        Since the size of intersection is at least $x+f+1+|\fc(B')|-n=x+1$, there should be at least $x+1$ Byzantine replicas, which contradicts the assumptions that there are at most $t\leq x$ Byzantine replicas.
        Therefore, any conflicting block $B'$ cannot be certified at round $r' \geq r+3$.
        
    \end{itemize}
\end{proof}

\begin{theorem}[Safety]\label{thm:safety}
    SFT-DiemBFT with the set of $x$-strong commit rules defined in Figure~\ref{fig:oft-diembft} is safe.
\end{theorem}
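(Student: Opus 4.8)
The plan is to derive the theorem almost entirely from Lemma~\ref{lem:2}, which already rules out conflicting \emph{certified} blocks once a single block has been $x$-strong committed. Unfolding the safety definition, I fix an $x$-strong committed block $B$ of round $r$, assume toward a contradiction that some conflicting block $B'$ of round $r'$ is $x'$-strong committed with $x'\geq x$, and work under $t\leq x$ Byzantine faults. The goal is to contradict the fact that $B$ and $B'$ conflict, i.e.\ that neither extends the other.

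The one small fact I need before invoking Lemma~\ref{lem:2} is that every strong committed block is certified. This is immediate from the strong $3$-chain rule: an $x'$-strong committed block is either the first block $B_k$ of some strong $3$-chain $B_k,B_{k+1},B_{k+2}$, in which case it is certified by the \strongqc carried inside $B_{k+1}$ (a \strongqc is in particular a QC of $2f+1$ votes), or it is a strict ancestor of such a $B_k$, in which case it is certified by the QC embedded in its child block. Hence both $B$ and $B'$ are certified.

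Now a two-line case split on the rounds finishes it. If $r\leq r'$, I apply Lemma~\ref{lem:2} to $B$: since $B$ is $x$-strong committed and $t\leq x$, every certified block of round $\geq r$ extends $B$, so the certified block $B'$ of round $r'\geq r$ extends $B$ --- contradicting that $B,B'$ conflict. If instead $r>r'$, I apply Lemma~\ref{lem:2} to $B'$ instead: since $B'$ is $x'$-strong committed and $t\leq x\leq x'$, every certified block of round $\geq r'$ extends $B'$, so the certified block $B$ of round $r>r'$ extends $B'$ --- again a contradiction. Either way no such $B'$ exists, so the protocol is safe.

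I do not anticipate a genuine obstacle: Lemma~\ref{lem:2} is the crux and is already proved. The only things that need a moment of attention are (i) observing that an $x$-strong commit implies certification, so that Lemma~\ref{lem:2} can legitimately be applied to the \emph{other} block, and (ii) choosing which of $B$, $B'$ to feed into Lemma~\ref{lem:2} according to which has the smaller round number rather than the smaller resilience parameter --- the chain of inequalities $t\leq x\leq x'$ is exactly what guarantees the hypothesis ``number of Byzantine faults $\leq$ strength'' of Lemma~\ref{lem:2} holds in whichever case arises.
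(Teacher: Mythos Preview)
Your approach is the paper's, but with one subtle gap. Lemma~\ref{lem:2} as proved applies only to a block that \emph{heads} a strong $3$-chain --- its proof opens with ``there exist three blocks $B=B_k,B_{k+1},B_{k+2}$ with consecutive round numbers'' --- and its conclusion is in fact \emph{false} for a block that is $x$-strong committed merely as an ancestor: under $t\leq x$ faults a certified conflicting block can perfectly well sit at some round strictly between the ancestor's round and the head $B_k$'s round. So when you write ``apply Lemma~\ref{lem:2} to $B$'' (and symmetrically to $B'$), you are overreaching whenever $B$ was committed only transitively.

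The paper repairs this exactly as one would expect: it passes to the directly committed heads $B_k$ and $B_{k'}'$ on whose account $B$ and $B'$ are committed, compares \emph{their} rounds, applies Lemma~\ref{lem:2} to the one with smaller round (using the same $t\leq x\leq x'$ observation you make), and then argues that $B_{k'}'$ cannot extend $B_k$ because otherwise $B$ and $B'$ would both be ancestors of $B_{k'}'$ and hence could not conflict. Your explicit check that strong committed blocks are certified, and your explicit two-case split in place of the paper's terse ``without loss of generality'', are both nice; just carry them out at the level of $B_k$ and $B_{k'}'$ and the argument goes through.
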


\begin{proof}
    Suppose $B$ is $x$-strong committed due to some block $B_k$ being $x$-strong committed directly.
    Suppose the safety is violated for the sake of contradiction, under $t\leq x$ Byzantine replicas, a conflicting block $B'$ is also $x'$-strong committed  due to block $B_{k'}'$ being $x'$-strong committed directly where $x'\geq x$.
    Suppose $B_k$  has round number $r$, and $B_{k'}'$ has round number $r'$.
    Without loss of generality, suppose that $r'\geq r$.
    By Lemma \ref{lem:2}, under $t\leq x$ faults any certified block with round number $\geq r$ must extend $B_k$. 
    Since $B,B'$ are conflicting blocks, $B_{k'}'$ does not extend $B_k$, and $B_{k'}'$ cannot be certified. This contradicts the assumption that $B_{k'}'$ is $x'$-strong committed, proving the theorem.
\end{proof}

\begin{theorem}[Liveness]\label{thm:liveness}
    After GST, if all $c\leq f$ faults are benign (crash) 
    and rounds $r$ to $r+2$ have honest leaders, then the block $B$ proposed by the round $r$ leader is $(2f-c)$-strong committed within at most $n+2$ rounds.
\end{theorem}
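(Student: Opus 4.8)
The plan is to trace what happens after GST when rounds $r$ through $r+2$ have honest leaders and all $c \le f$ faults are crash faults, and show that the block $B = B_k$ proposed in round $r$ eventually accumulates $3f - c + 1 = (2f-c) + f + 1$ endorsers on each of $B_k$, $B_{k+1}$, $B_{k+2}$, which is exactly the threshold needed to $(2f-c)$-strong commit $B_k$. First I would argue that the honest leader of round $r$ succeeds in forming a $\strongqc$: after GST, within a constant number of rounds the pacemaker synchronizes all honest replicas into the same round, the round-$r$ honest leader proposes $B_k$ extending $qc_{high}$, and all $n - c = 3f - c + 1 \ge 2f+1$ honest replicas vote for it (the voting-rule conditions $r > r_{vote}$ and $B_{k-1}.round \ge r_{lock}$ are met because the leader extends the highest certified block, which after synchronization is known to be at least as high as any honest lock). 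The same argument applies to rounds $r+1$ and $r+2$, so we get three certified blocks with consecutive round numbers $r, r+1, r+2$, each carrying a $\strongqc$ of (at least) $2f+1$ signed $\strongvote$s; this already gives a regular commit of $B_k$.

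The heart of the argument is the upgrade from $f$ endorsers-worth of strength to $2f - c$. The key observation is that under crash faults, honest replicas never vote on conflicting forks, so for every honest replica $i$ that votes for $B_{k+2}$ (or any descendant of $B_k$), its $\marker$ value is the largest round of a block conflicting with that descendant for which $i$ voted — and since $i$ only ever voted on the single chain through $B_k$ that the honest leaders built, $i$ has no conflicting vote at all, hence $\marker = 0 < r$. Therefore every honest vote for $B_k$ or any of its descendants endorses $B_k$ (and similarly $B_{k+1}$, $B_{k+2}$). Next I would show the endorser count actually reaches $3f - c + 1$: beyond round $r+2$, the honest leaders of rounds $r+3, \ldots$ keep proposing blocks extending the chain, and over the next several rounds every one of the $n - c$ honest replicas casts a $\strongvote$ for some descendant of $B_{k+2}$; each such vote, having $\marker = 0 < r \le r+2$, endorses all of $B_k, B_{k+1}, B_{k+2}$. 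Collecting these indirect endorsements from the $\strongqc$s of the extension blocks, each of $B_k, B_{k+1}, B_{k+2}$ accrues $n - c = 3f - c + 1 = (2f-c) + f + 1$ endorsers, so the strong $3$-chain rule fires with $x = 2f - c$.

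Finally I would do the round bookkeeping. After GST, synchronization takes $O(1)$ rounds to get honest replicas aligned; rounds $r, r+1, r+2$ produce the three consecutive certified blocks; and to harvest endorsements from \emph{all} $n - c$ honest replicas we need enough further rounds that every honest replica's $\strongvote$ for a descendant of $B_{k+2}$ has propagated into some $\strongqc$ a replica sees — in the worst case this is a constant number of rounds per straggler, but since all honest leaders in this window are cooperating and the network is synchronous, the $n - c$ honest votes are absorbed within at most $n$ additional rounds, giving the stated bound of $n + 2$ rounds total (the "$+2$" absorbing the three-chain formation beyond the initial round).

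The main obstacle I anticipate is the round-counting in the last step: one must be careful that the honest leaders of rounds $r+3$ onward are not assumed honest by hypothesis (only $r, r+1, r+2$ are), so the argument that endorsements keep accumulating must either (a) appeal to the fact that with only crash faults, a non-leader crash does not stall a round and a crashed leader merely costs one timeout round, so within $O(n)$ rounds enough honest leaders appear to carry the chain forward and collect all $n-c$ honest $\strongvote$s, or (b) observe that each honest replica, upon seeing the $3$-chain, will itself vote along this chain at its next opportunity, and these votes are gathered by subsequent (eventually honest) leaders. Pinning down exactly why $n$ extra rounds suffice — rather than, say, the stronger liveness of Theorem~\ref{thm:newliveness} which needs the $\strongvote$ machinery of Section~\ref{sec:discussion:strongvote} — is the delicate part, and it hinges crucially on the crash-only assumption guaranteeing $\marker = 0$ for every honest endorser.
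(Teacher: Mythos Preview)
Your approach is essentially the same as the paper's: establish the $3$-chain via the DiemBFT liveness argument, observe that crash-only faults mean no forks arise after GST so every honest vote on the chain endorses $B_k,B_{k+1},B_{k+2}$, and then count endorsers. Two points deserve tightening.

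First, the claim that every honest replica has $\marker = 0$ is too strong. Even with only crash faults, asynchrony before GST can create forks, so an honest replica may well have voted on a block conflicting with the eventual chain through $B_k$. What you actually need (and what the paper uses, implicitly) is $\marker < r$: any such conflicting vote was cast in a round before $r$, and after GST no new forks arise, so the marker attached to any later vote for a descendant of $B_k$ is bounded by some pre-GST round number $< r$. That suffices for endorsement.

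Second, and more substantively, your justification for the $n$-round bound is the hand-wavy ``honest leaders cooperate so the $n-c$ votes are absorbed within $n$ rounds.'' This skips the actual mechanism. A leader forms a \strongqc from only $2f+1$ votes, so a persistently slow replica's \strongvote may never be selected by other leaders. The paper's argument is that the \emph{round-robin} leader rotation guarantees each (non-crashed) replica becomes leader exactly once within any window of $n$ rounds, and when it is leader it forms the \strongqc and can include its \emph{own} \strongvote in it. That is why $n$ rounds suffice to make every live replica an endorser of $B_k$, and $n+2$ rounds suffice for all three blocks in the $3$-chain. Your options (a) and (b) circle this point without landing on it; in particular, ``votes are gathered by subsequent leaders'' does not by itself explain why a straggler's vote ever makes it into any \strongqc.
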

\begin{proof}
    After GST the system becomes synchronous, and every honest leader is able to propose a block extending the previous block proposal. By the liveness proof of the original DiemBFT protocol~\cite{baudet2019state}, the round-$r$ block $B$ will be committed after round $r+2$. 
    Since every replica is honest or benign, there will be no fork in the blockchain under synchrony. Within $n$ rounds, each replica will become the leader once due to the round-robin leader rotation and include its \strongvote in the \strongqc. Therefore, for any block $B$ proposed after GST, within $n$ rounds every replica will become the endorser of $B$, except for those $c$ replicas that may crash. Then, within $n+2$ rounds, each of the block in the $3$-chain will have $n-c=3f+1-c$ endorsers, and thus $B$ is $(2f-c)$-strong committed.
\end{proof}

\subsection{Generalizing \strongvote for Better Liveness}
\label{sec:discussion:strongvote}
As mentioned, the liveness guarantee of the protocol can be improved by embedding more information in \strongvote.
Specifically, we can generalize \strongvote to be $\langle \vote, B, r, \mathcal{I} \rangle$, where $\mathcal{I}$ is a set of {\em intervals of round numbers that \strongvote endorses}, i.e., the \strongvote endorses any $B'$ whose round number $r'\in I$.
The previous solution with one \marker corresponds to the special case where $\mathcal{I}=[\marker+1, r]$ where $r$ is the round number of $B$.

The rule for computing $\mathcal{I}$ is also simple. 
When replica $h$ votes for a block $B$ on a chain $\fc$, for every fork $\ff$ on which $h$ ever voted for some conflicting block $B'$, it computes an interval of round numbers $D_{\ff}$ it {\em does not} endorse. 
Then $\mathcal{I}$ is simply computed as $\mathcal{I}=[1, r]\setminus (\cup_{\ff}D_{\ff})$.
The interval $D_{\ff}$ is calculated as $D_{\ff}=[r_l+1, r_h]$, where $r_h$ is the largest round number of any conflicting block $B'$ that $h$ voted for on $\ff$, and $r_l$ is the largest round number of the common ancestor block of both $B$ and $B'$.

It is straightforward that specifying a subset of $\mathcal{I}$ in the \strongvote only affects the liveness of strong commit but not safety.
Therefore, there exists a tradeoff between the amount of information in the \strongvote and the liveness guarantee of the strong commit.
For instance, the solution we presented in Section \ref{sec:diembft:oft} attaches the minimum amount of information in the \strongvote by setting $\mathcal{I}=[\marker+1, r]$ where $r$ is the round number of the \strongvote. As a result, the liveness guarantee there is relatively weak -- strong commits are ensured only with benign faults (Theorem~\ref{thm:liveness}).
On the contrary, if \strongvote attaches the set of intervals for the last $n$ rounds, i.e., $\mathcal{I}=[r-n, r]\setminus (\cup_{\ff}D_{\ff})$, then the liveness guarantee for SFT-DiemBFT can be strengthened as Theorem~\ref{thm:newliveness} (proofs in Appendix~\ref{sec:proofs}).
This is still efficient for moderate values of $n$ since at most $t$ intervals need to be specified during periods of synchrony where $t$ is the actual number of Byzantine faults.

\begin{theorem}\label{thm:newliveness}
    After GST, if the actual number of Byzantine faults is $t\leq f$ and rounds $r$ to $r+5$ have honest leaders, then the block $B$ proposed by the round $r$ leader is $(2f-t)$-strong committed within at most $n+2$ rounds.
\end{theorem}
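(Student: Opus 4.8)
The plan is to mirror the structure of the proof of Theorem~\ref{thm:liveness}, upgrading only the step that argues every honest replica becomes an endorser of $B$. First I would invoke the original DiemBFT liveness argument (as cited from~\cite{baudet2019state}): after GST, with honest leaders for rounds $r$ through $r+2$ (indeed a couple extra rounds of slack, which is why the statement asks for honest leaders through $r+5$), the round-$r$ block $B$ gets extended by a clean $3$-chain $B_k,B_{k+1},B_{k+2}$ with consecutive round numbers, so $B$ is regular-committed. This part is unchanged from before and I would state it quickly.

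The core of the argument is to show that within $n+2$ rounds, at least $n-t=2f+1+(f-t)$ replicas become endorsers of each of $B_k,B_{k+1},B_{k+2}$, which by the strong $3$-chain rule yields an $x$-strong commit with $x+f+1 = 2f+1+(f-t)$, i.e. $x = 2f-t$. The key step is: take any honest replica $h$ and the first round $\rho \geq r$ after GST in which $h$ casts a vote (for some block $B'$ on the canonical chain extending $B$, since after GST under synchrony all honest votes land on the single live chain). I would argue that $h$'s \strongvote for $B'$ endorses $B$. For this I have to examine the interval set $\mathcal{I}$ that $h$ attaches: $\mathcal{I} = [\rho - n, \rho] \setminus (\cup_{\ff} D_{\ff})$. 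The claim reduces to showing $r \notin D_{\ff}$ for every fork $\ff$ on which $h$ ever voted for a conflicting block $B''$. Each such $D_{\ff} = [r_l+1, r_h]$ where $r_h$ is the highest round of a conflicting vote of $h$ on $\ff$ and $r_l$ is the round of the last common ancestor of $B'$ and $B''$. Because $B$ (round $r$) is an ancestor of $B'$ on the live chain and $B''$ conflicts with $B$, the common ancestor of $B'$ and $B''$ has round $< r$ — wait, more carefully: the common ancestor of $B'$ and $B''$ is also an ancestor of $B$, so it has round $\le r$; if it equals $B$ then $B''$ extends $B$, contradicting conflict, so $r_l < r$, giving $r_l + 1 \le r$. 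For the upper end I need $r_h < r$ or else the reasoning still works because $B''$ conflicts with $B$ and has round $r_h$; if $r_h \ge r$ then a conflicting certified block of round $\ge r$ would exist, which Lemma~\ref{lem:2} forbids (once enough endorsers accumulate) — but there is a chicken-and-egg issue here, so instead I would argue directly: after GST no honest replica votes for a block conflicting with $B$, so any conflicting vote of $h$ on $\ff$ was cast \emph{before} GST; and the statement's ``optimistic period'' begins at GST, so... actually the cleanest route is to note $r > $ (all pre-GST rounds is false in general), so I must be careful. The honest fraction is the real lever: among the $n$ rounds after $\rho$ starts, each of the $n$ replicas leads once, and when honest replica $h$ leads it embeds its own \strongvote endorsing $B_k,B_{k+1},B_{k+2}$ into the \strongqc of its proposal; only the $t$ Byzantine replicas may decline, so after $n$ further rounds each of $B_k,B_{k+1},B_{k+2}$ has $\ge n-t$ endorsers, hence within $n+2$ rounds total $B$ is $(2f-t)$-strong committed.

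The main obstacle is exactly the interval-bookkeeping claim: proving that an honest replica's \strongvote cast at round $\rho$ (with $\rho$ possibly much larger than $r$, up to $r+n$) still endorses the much earlier block $B$, i.e. that none of the excluded intervals $D_{\ff}$ cover round $r$. The delicate point is that $h$ may have voted on conflicting forks in rounds between $r$ and $\rho$? No — after GST and with honest leaders producing a single chain, $h$'s post-GST votes are all on the live chain, so any conflicting vote is pre-GST and, crucially, its round $r_h$ must be less than $r$ because $B$ itself was proposed and certified after GST on top of the highest certified block, so all rounds $\le$ (round of $B$'s parent) that saw conflicting activity are $< r$; combined with $r_l < r$ this gives $D_{\ff} \subseteq [1, r-1]$ hmm — I need $r_h < r$ which follows since $B$'s round $r$ is after GST and $B''$ is a pre-GST block so $r_h < r$. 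Thus $r \notin D_{\ff}$ for all $\ff$, and since $\rho - n \le r$ (because $\rho \le r + n$, using that within $n$ rounds every honest replica votes), we get $r \in \mathcal{I}$, so $h$ endorses $B$. I would also double-check the ``$n+2$ rounds'' bookkeeping and the exact number of extra honest-leader rounds needed (the statement says through $r+5$, presumably to absorb pacemaker slack in re-synchronizing after GST before the clean $3$-chain starts), but those are routine and I would not belabor them.
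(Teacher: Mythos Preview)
Your high-level structure matches the paper's: invoke DiemBFT liveness for the regular $3$-chain, then argue that within $n$ rounds each of $B_k,B_{k+1},B_{k+2}$ accumulates $n-t$ endorsers via round-robin leadership. The gap is in the step where you justify $r\notin D_{\ff}$ for an honest replica $h$'s later \strongvote.

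Your justification is that any conflicting vote of $h$ must be pre-GST, hence $r_h<r$. This is not correct, for two reasons. First, the theorem only says round $r$ is \emph{some} round after GST with honest leaders in $[r,r+5]$; between GST and round $r$ there may be Byzantine leaders, so $h$ may well have cast post-GST votes on forks conflicting with $B$. Second, your claim that ``after GST under synchrony all honest votes land on the single live chain'' is false: Byzantine leaders after round $r+5$ can and do propose forking blocks, and you must argue that honest replicas will not vote for any such block that conflicts with $B$. The GST timing alone does not give you $r_h<r$.

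The paper's argument supplies the missing mechanism: the \emph{locking rule}. Once an honest replica receives the round-$(r+2)$ proposal (carrying the QC for $B_{k+1}$), its $r_{lock}\geq r$; thereafter it refuses any proposal whose parent has round $<r$. Combined with the fact that leaders of rounds $r$ through $r+2$ are honest (so no fork is created in those rounds), this means that for every fork $\ff$ on which $h$ voted for a block conflicting with $B$, that vote was cast in a round $<r$, giving $r_h<r$ and hence $r\notin D_{\ff}$. You actually brushed against the right tool when you mentioned Lemma~\ref{lem:2} and then backed off citing a ``chicken-and-egg issue''; there is no circularity, because once the regular $3$-chain forms, $B$ is $f$-strong committed and Lemma~\ref{lem:2} with $x=f$ (and $t\leq f$) already forbids any conflicting certified block of round $\geq r$, which together with the lock blocks honest votes on such forks. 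Replace your GST-timing paragraph with the locking-rule argument and the proof goes through; the paper then reruns the same reasoning for $B_{k+1}$ and $B_{k+2}$ (this is where honest leaders through $r+5$ are used) to get $n-t$ endorsers for all three blocks within $n+2$ rounds.
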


\section{Implementation and Evaluation}
\label{sec:evaluation}

\paragraph{Implementation details.}
We implement SFT-DiemBFT presented in Section~\ref{sec:diembft:oft} on top of the open source implementation of DiemBFT~\cite{diem,baudet2019state}. 
DiemBFT is written in Rust and has about 400K lines of code.
Our implementation of SFT-DiemBFT has approximately 1.5K lines of code and consists of three main components:
computing \marker, adopting \strongvote and \strongqc, and identifying strong commits. 
To compute the value of \marker, each replica locally stores the highest-round block it has ever voted, for every fork in the blockchain.
When a replica votes, it includes the \marker in the \strongvote, and the leader forms a \strongqc after receiving $2f+1$ \strongvote messages.
Then, whenever a replica receives a new \strongqc, it updates the number of endorsers for all ancestor blocks and their strong commit resilience levels.

\paragraph{Experimental setup.}
We use Amazon EC2 to conduct all the experiments, and execute each replica on a \texttt{c5d.2xlarge} instance. 
Each \texttt{c5d.2xlarge} instance has $8$ vCPUs supported by Intel Xeon Platinum 8000 series processors with a sustained all core Turbo CPU clock speed of up to 3.6 GHz. Each instance's memory size is $16$GiB, and the storage uses 200 GiB NVMe SSD.

We run $n=100$ replica instances in all our experiments, and hence denote $f=(n-1)/3=33$.
Each proposed block contains roughly $1000$ transactions, and has a size of around $450$KB.
Sufficiently many transactions are generated and submitted by the clients so that any leader always has enough transactions to include in its proposed block.

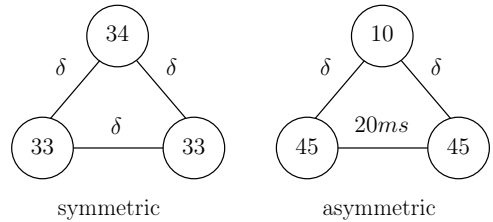
\begin{wrapfigure}{r}{0.4\textwidth}
  \begin{center}

\resizebox{.4\textwidth}{!}{

\tikzset{every picture/.style={line width=0.75pt}} 

\begin{tikzpicture}[x=0.75pt,y=0.75pt,yscale=-1,xscale=1]

\draw   (120,65) .. controls (120,51.19) and (131.19,40) .. (145,40) .. controls (158.81,40) and (170,51.19) .. (170,65) .. controls (170,78.81) and (158.81,90) .. (145,90) .. controls (131.19,90) and (120,78.81) .. (120,65) -- cycle ;
\draw   (59,156) .. controls (59,142.19) and (70.19,131) .. (84,131) .. controls (97.81,131) and (109,142.19) .. (109,156) .. controls (109,169.81) and (97.81,181) .. (84,181) .. controls (70.19,181) and (59,169.81) .. (59,156) -- cycle ;
\draw   (182,156) .. controls (182,142.19) and (193.19,131) .. (207,131) .. controls (220.81,131) and (232,142.19) .. (232,156) .. controls (232,169.81) and (220.81,181) .. (207,181) .. controls (193.19,181) and (182,169.81) .. (182,156) -- cycle ;
\draw    (90.5,132) -- (130.5,85) ;
\draw    (200.5,132) -- (160.5,85) ;
\draw    (109,156) -- (182,156) ;
\draw   (335,65) .. controls (335,51.19) and (346.19,40) .. (360,40) .. controls (373.81,40) and (385,51.19) .. (385,65) .. controls (385,78.81) and (373.81,90) .. (360,90) .. controls (346.19,90) and (335,78.81) .. (335,65) -- cycle ;
\draw   (274,156) .. controls (274,142.19) and (285.19,131) .. (299,131) .. controls (312.81,131) and (324,142.19) .. (324,156) .. controls (324,169.81) and (312.81,181) .. (299,181) .. controls (285.19,181) and (274,169.81) .. (274,156) -- cycle ;
\draw   (397,156) .. controls (397,142.19) and (408.19,131) .. (422,131) .. controls (435.81,131) and (447,142.19) .. (447,156) .. controls (447,169.81) and (435.81,181) .. (422,181) .. controls (408.19,181) and (397,169.81) .. (397,156) -- cycle ;
\draw    (305.5,132) -- (345.5,85) ;
\draw    (415.5,132) -- (375.5,85) ;
\draw    (324,156) -- (397,156) ;

\draw (134,55) node [anchor=north west][inner sep=0.75pt]  [font=\Large] [align=left] {$\displaystyle 34$};
\draw (73,147) node [anchor=north west][inner sep=0.75pt]  [font=\Large] [align=left] {$\displaystyle 33$};
\draw (196,147) node [anchor=north west][inner sep=0.75pt]  [font=\Large] [align=left] {$\displaystyle 33$};
\draw (138,130) node [anchor=north west][inner sep=0.75pt]  [font=\Large] [align=left] {$\displaystyle \delta $};
\draw (93,82) node [anchor=north west][inner sep=0.75pt]  [font=\Large] [align=left] {$\displaystyle \delta $};
\draw (183,82) node [anchor=north west][inner sep=0.75pt]  [font=\Large] [align=left] {$\displaystyle \delta $};
\draw (95,196) node [anchor=north west][inner sep=0.75pt]  [font=\Large] [align=left] {symmetric};
\draw (349,55) node [anchor=north west][inner sep=0.75pt]  [font=\Large] [align=left] {$\displaystyle 10$};
\draw (288,147) node [anchor=north west][inner sep=0.75pt]  [font=\Large] [align=left] {$\displaystyle 45$};
\draw (411,147) node [anchor=north west][inner sep=0.75pt]  [font=\Large] [align=left] {$\displaystyle 45$};
\draw (336,129) node [anchor=north west][inner sep=0.75pt]  [font=\Large] [align=left] {$\displaystyle 20ms$};
\draw (308,82) node [anchor=north west][inner sep=0.75pt]  [font=\Large] [align=left] {$\displaystyle \delta $};
\draw (398,82) node [anchor=north west][inner sep=0.75pt]  [font=\Large] [align=left] {$\displaystyle \delta $};
\draw (310,196) node [anchor=north west][inner sep=0.75pt]  [font=\Large] [align=left] {asymmetric};

\end{tikzpicture}

}
  \end{center}
  \caption{Illustrations of Symmetric and Asymmetric Settings. All $100$ replicas are divided into $3$ regions, with message delays shown on the edges.}
  \label{fig:geo}
\end{wrapfigure}

Each experiment is run for at least $5$ minutes, and each data point is the average value measured over all blocks over all replicas.
Since \strongvote adds very small overhead (one integer) to message size, as expected, we found that the throughput of SFT-DiemBFT is almost identical to that of the original DiemBFT protocol in all our experiments. Hence we omit the throughput results.
We focus on {\em latency} of strong commits of different resilience levels, measured by the time duration from when a block is created to when the block is strong committed.

We evaluate the latency of strong commits in a geo-distributed setting. We consider two scenarios with symmetric and asymmetric network delay distributions, as illustrated in Figure~\ref{fig:geo}. 
In the symmetric setting,  we partition all $100$ replicas evenly into $3$ regions and inject a fixed delay $\delta$ between any pair of replicas across different regions.
In the asymmetric setting, we consider three regions $A,B,C$ with $45,45,10$ nodes respectively. The delay across $A$ and $B$ is $20$ms, while the delays across $C,A$ and $C,B$ are $\delta$. 
We evaluate strong commit latencies with $\delta=100$ms and $200$ms in Section~\ref{sec:evaluation:geo}, and $\delta=100$ms in Section~\ref{sec:evaluation:tradeoff}.

\subsection{Strong Commit Latency}
\label{sec:evaluation:geo}

We first evaluate the symmetric setting. The strong commit latency results are given in Figure~\ref{fig:geo:1}.
As can be seen from the figure, the $x$-strong commit latency increases with $x$ almost linearly, except for $1.1f$-strong commit and $2f$-strong commit that have larger latency increases than other strong commits.
To $1.1f$-strong commit a block, the protocol needs to wait for at least one more round-trip to obtain a new \strongqc, and hence the latency of $1.1f$-strong commits is $4$ round-trips delay ($3$ round-trips by $3$-chain rule plus the additional round-trip for the new \strongqc), while $f$-strong commits needs $3$ round-trips by the $3$-chain rule. 
After that, the latency of strong commits increases slowly as $x$ increase, which implies the \strongqcs in the chain have pretty good diversity and thus each block get endorsed by many endorsers very quickly.
To achieve a higher level of resilience such as the $2f$-strong commit, however, \strongvote from most of the replicas need to be included in some \strongqc  in the chain, including the ``stragglers'' in the system who are out-of-sync due to slow network/computation.
A straggler may have one only chance every $n$ rounds to include its \strongvotes in some \strongqc on the chain, that is, when it becomes the leader. Therefore, the latency for close to $2f$-strong commits are significantly higher.

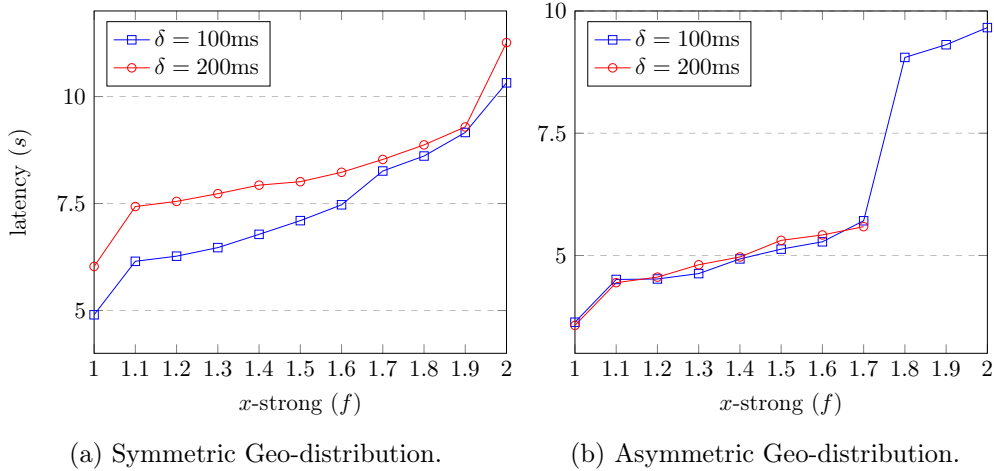
\begin{figure*}[t]
   \begin{minipage}{\textwidth}
     \centering
\subcaptionbox{Symmetric Geo-distribution.\label{fig:geo:1}}{
    \begin{tikzpicture}[scale=0.8]
    \centering
\begin{axis}[
    xlabel={$x$-strong ($f$)},
    ylabel={latency ($s$)},
    xmin=1, xmax=2,
    ymin=4, ymax=12,
    xtick={1,1.1, 1.2,1.3,1.4,1.5,1.6,1.7,1.8,1.9,2},
    ytick={0,5,7.5,10,15,20,25,30,35},
    legend pos=north west,
    ymajorgrids=true,
    grid style=dashed,
]

\addplot[
    color=blue,
    mark=square,
    ]
    coordinates {
    (1,4.9)(1.1,6.15)(1.2,6.27)(1.3,6.47)(1.4,6.78)(1.5,7.1)(1.6,7.47)(1.7,8.26)(1.8,8.61)(1.9,9.16)(2,10.32)
    };
    \addlegendentry{$\delta=100$ms}
    
\addplot[
    color=red,
    mark=o,
    ]
    coordinates {
    (1,6.03)(1.1,7.43)(1.2,7.55)(1.3,7.73)(1.4,7.93)(1.5,8.01)(1.6,8.23)(1.7,8.53)(1.8,8.87)(1.9,9.29)(2,11.26)
    };
    \addlegendentry{$\delta=200$ms}

\end{axis}
\end{tikzpicture}
}%
\subcaptionbox{Asymmetric Geo-distribution.\label{fig:geo:2}}{%
\begin{tikzpicture}[scale=0.8]
\begin{axis}[
    xlabel={$x$-strong ($f$)},
    xmin=1, xmax=2,
    ymin=3, ymax=10,
    xtick={1,1.1, 1.2,1.3,1.4,1.5,1.6,1.7,1.8,1.9,2},
    ytick={0,5,7.5,10,15,20,25,30,35},
    legend pos=north west,
    ymajorgrids=true,
    grid style=dashed,
]

\addplot[
    color=blue,
    mark=square,
    ]
    coordinates {
    (1,3.636)(1.1,4.51)(1.2,4.52)(1.3,4.63)(1.4,4.93)(1.5,5.13)(1.6,5.28)(1.7,5.71)(1.8,9.05)(1.9,9.31)(2,9.66)
    };
    \addlegendentry{$\delta=100$ms}
    
\addplot[
    color=red,
    mark=o,
    ]
    coordinates {
    (1,3.571)(1.1,4.444)(1.2,4.558)(1.3,4.81)(1.4,4.97)(1.5,5.31)(1.6,5.42)(1.7,5.59)
    };
    \addlegendentry{$\delta=200$ms}
\end{axis}
\end{tikzpicture}
}

\caption{Strong Commit Latencies under Geo-distributed Settings.}
   \end{minipage}
\end{figure*}

The latency results of strong commits in the \emph{asymmetric} setting are given in Figure~\ref{fig:geo:2}.
Note that in the asymmetric case, we set the size of the region $C$ to be small, and the delay between $C$ and $A$ or $B$ is large. Therefore, when the leader is from regions $A,B$, the \strongqc will contain only \strongvotes from replicas in $A,B$ because their \strongvote messages always arrive faster.
Moreover, blocks can be $x$-strong committed for $x\leq 1.7f$ with endorsers only from regions $A,B$, and thus the latency of these strong commits is relatively small.
However, to achieve a resilience higher than $1.7f$, the \strongvotes from replicas in the region $C$ also need to be included in some \strongqcs on the chain.
When $\delta=100$ms, those \strongvotes can be included only when the replicas in the region $C$ become the leader, which is $10$ rounds out of $100$ rounds due to the round-robin leader rotations. Since such \strongqcs are less frequent and have larger round latencies,  the latencies to achieve $x$-strong commits for $x\geq 1.8f$ is significantly higher than others.
When $\delta=200$ms, we discover that any leader from the region $C$ will timeout and be replaced. 
As a result, any \strongqc in the blockchain never contains \strongvotes from replicas in $C$, and thus the highest strong commit resilience level is $2f-10=1.7f$.
In practice, there may indeed exist such ``outcast replicas'' in the system whose \strongvotes are never included in the blockchain, and therefore the highest level of strong commit may not be achieved. However, these replicas in the systems should be reconfigured or replaced, since they are not actively participating in the protocol.

\subsection{Tradeoff between regular commit and strong commit latencies}
\label{sec:evaluation:tradeoff}

From the previous discussions, we know that bottleneck of strong commit latency is due to stragglers in the system, whose \strongvotes are rarely included in \strongqcs in the blockchain.
Therefore, a natural idea to speed up the strong commits is to intentionally wait for those stragglers, by asking leaders to wait for an extra period of time after receiving $2f+1$ \strongvotes and including any new \strongvotes in the \strongqc to improve QC diversity. 
This technique improves strong commit latency but sacrifices round latency, which leads to a higher latency for regular commits.
We explore this tradeoff space and characterize the results in Figure~\ref{fig:tradeoff}.

We use the symmetric geo-distributed setting with $\delta=100$ms.
For each round, we gradually increase the round latency such that the leader can include more than $2f+1$ \strongvotes in the \strongqc. 
As can be seen in Figure~\ref{fig:tradeoff}, a relatively small increase in the regular commit latency (about $0.5$s) can significantly reduce the $2f$-strong commit latency from about $10$s to about $5$s. 
For strong commits with resilience other than $2f$, the commit latency first decrease dramatically and then slowly increase. 
The straight line at the bottom represents the latency for the regular commit. 
All strong commit latency curves coincide with the regular commit latency curve after some threshold.
The reason is that when each leader includes $Q\geq 2f+1$ \strongvotes in the \strongqcs, a $3$-chain for a regular commit ($f$-strong) happen at the same time as a $(Q-f-1)$-strong commit; hence the two curves merge.

\begingroup
\setlength{\intextsep}{0pt}

\begin{wrapfigure}{r}{0.47\textwidth}
\centering

\resizebox{.47\textwidth}{!}{

\begin{tikzpicture}
\begin{axis}[
    xlabel={regular commit latency ($s$)},
    ylabel={strong commit latency ($s$)},
    xmin=4.7, xmax=5.4,
    ymin=4, ymax=11,
    xtick={4,4.5,4.75,5,5.25,5.5,6},
    ytick={0,5,7.5,10,15,20,25,30,35},
    legend pos=north east,
    ymajorgrids=true,
    grid style=dashed,
]

\addplot[
    color=blue,
    mark=square,
    ]
    coordinates {
    (4.7, 10.32)(4.73, 9.78)(4.77,9.72)(4.79, 8.68)(4.88, 8.14)(4.94, 7.49)(5.11, 7.14)(5.36, 5.36)
    };
    \addlegendentry{$2f$-strong commit}
    
\addplot[
    color=black,
    mark=o,
    ]
    coordinates {
    (4.7, 8.91)(4.73, 7.33)(4.77,6.86)(4.79, 6.26)(4.88, 6.16)(4.94, 4.94)(5.11, 5.11)(5.36, 5.36)
    };
    \addlegendentry{$1.8f$-strong commit}
    
\addplot[
    color=red,
    mark=triangle,
    ]
    coordinates {
    (4.7, 7.47)(4.73, 6.26)(4.77,6.22)(4.79, 4.79)(4.88, 4.88)(4.94, 4.94)(5.11, 5.11)(5.36, 5.36)
    };
    \addlegendentry{$1.6f$-strong commit}

\addplot[
    color=orange,
    mark=x,
    ]
    coordinates {
    (4.7, 6.78)(4.73, 6.15)(4.77,4.77)(4.79, 4.79)(4.88, 4.88)(4.94, 4.94)(5.11, 5.11)(5.36, 5.36)
    };
    \addlegendentry{$1.4f$-strong commit}
    
\addplot[
    color=green,
    mark=+,
    ]
    coordinates {
    (4.7, 6.15)(4.73, 4.73)(4.77,4.77)(4.79, 4.79)(4.88, 4.88)(4.94, 4.94)(5.11, 5.11)(5.36, 5.36)
    };
    \addlegendentry{$1.2f$-strong commit}
    
\end{axis}
\end{tikzpicture}

}
    \caption{Strong/regular Commit Latency Tradeoffs.}
    \label{fig:tradeoff}
\end{wrapfigure}
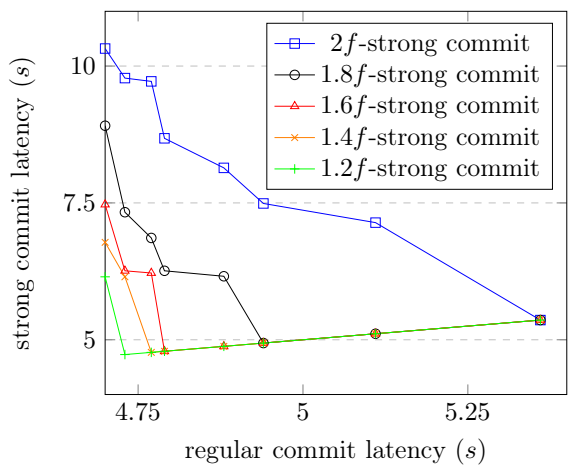

Given the tradeoff observed in Figure~\ref{fig:tradeoff}, we see some interesting strategies for practice. First, a short extra wait seems worthwhile as it sacrifices regular commit latency slightly but significantly reduces strong commit latency. 
Second, the tradeoff can be made in a dynamic fashion.
For instance, if replicas want to strong commit a certain block $B$ quickly (say it contains many high-valued transactions), the leader who proposes $B$ and the next few leaders can increase their round latency to include more \strongvotes in their \strongqcs. 
This way, only block $B$ and a few subsequent blocks incur higher regular commit latencies. Once block $B$ is strong committed with sufficient strength, future leaders can change back their round latencies.

\section{Discussions and Extensions}
\label{sec:discussion}


\paragraph{Proving Strong Commit to Light Clients.}
\label{sec:lightclient}
One important functionality in practice is to prove the strong commit to a light client, which models lightweight applications interacting with the blockchain such as wallet apps, who have no access to the blockchain or do not store the blockchain locally.
To prove the strong commit efficiently, the protocol can include an additional \texttt{Log} on every block proposal, which records any update on the strong commit level of previous blocks due to the new \strongqc contained in the proposal.
Once the block proposal is certified ($2f+1$ replicas voted), at least one honest replica agrees on the strong commit update assuming the number of Byzantine faults does not exceed $2f$ (recall our SFT solutions provides up to $2f$ resilience). Hence it is sufficient to show the certified \texttt{Log} to the light client to prove the strong commit.

\paragraph{Optimizations for Strong Commit Latencies.}
The latency of achieving $x$-strong commit for a block depends on how fast the endorsers appear in the chain extension.
In practice, how to optimize the latency of strong commits under various real-world scenarios is an interesting problem.
For instance, to ensure the diversity of \strongqc, the protocol designer may intentionally extend the duration of a certain round for the leader to collect more votes once every $n$ rounds. 
Moreover, the QC diversity requirement implied by strong commit is closely aligned with the task of monitoring the health conditions of the replicas in the system, which can be done via observing the QCs in the chain and detecting slow replicas.
We leave other possible practical optimizations as a future work.

\paragraph{Conflicting Transactions.}
With the notion of strong commits, transactions with lower resilience requirements from later blocks might be committed before transactions with higher resilience requirements from earlier blocks on the chain. 
For example, $txn_1$ is contained in $B_1$ and requires a $2f$-strong commit, and $txn_2$ is contained in $B_2$ extending $B_1$ but only requires a $f$-strong commit. Then it is possible that $txn_2$ is $f$-strong committed before $txn_1$ is $2f$-strong committed, and there would be an issue if $txn_1$ and $txn_2$ are conflicting transactions.
To avoid such conflict, the protocol can ask the leader to propose conflicting transactions only after the block containing the earlier transaction is already strong committed.
Intuitively, this means when a high-valued transaction is still waiting to be strong committed, other transactions issued by the same sender/receiver should not be processed until the earlier transaction is committed.
We leave other possible mechanisms for resolving conflicting transactions as a future work.

\section{Related Works}
\label{sec:relatedwork}
\paragraph{Byzantine fault tolerant replication.}
Byzantine fault tolerant state machine replication has been extensively studied in the literature. 
Under partial synchrony, PBFT is the first practical leader-based BFT SMR protocol with $O(n^2)$ communication cost per decision under a stable leader and $O(n^3)$ communication cost to replace a leader. 
PBFT has been deployed in many systems such as BFT-SMART~\cite{bessani2014state} and Hyperledger fabric~\cite{androulaki2018hyperledger}.
Following PBFT, an important line of work targets at reducing the communication cost of BFT SMR protocols, including Zyzzyva~\cite{kotla2007zyzzyva,abraham2018revisiting},
Terdermint~\cite{buchman2016tendermint}, SBFT~\cite{gueta2019sbft},
Casper~\cite{buterin2017casper} and HotStuff~\cite{yin2019hotstuff}.
The last two works above developed the framework of chain-based BFT .
There are also recent works on chain-based BFT SMR under synchrony~\cite{hanke2018dfinity,Abraham2020SyncHS,abraham2020optimality}, including Dfinity~\cite{hanke2018dfinity}, Sync HotStuff~\cite{Abraham2020SyncHS}, SMR with optimal optimistic responsiveness~\cite{abraham2020optimality}, and $1\Delta$-SMR~\cite{abraham2020optimal}. The above synchronous BFT SMR protocols proceed in a non-conventional non-lockstep fashion, and aim to improve the commit latency under an honest leader.
For asynchronous systems where the message delay is unbounded, it is impossible to solve consensus deterministically in the presents of faults~\cite{fischer1985impossibility}. Several recent proposals on asynchronous BFT replication protocols focuses on improving the communication complexity and latency, including HoneyBadgerBFT~\cite{miller2016honey}, VABA~\cite{abraham2019asymptotically} and Aleph~\cite{gkagol2019aleph}.

\paragraph{Flexibility in BFT protocols.}
The most related and inspiring work for this paper is Flexible Byzantine Fault Tolerance (FBFT)~\cite{malkhi2019flexible}, where one BFT SMR solution support clients with various fault and synchronicity beliefs. 
In FBFT, each client can specify its assumption on the fault threshold in terms of Byzantine and Alive-but-corrupt (a-b-c) faults, where the a-b-c fault only attacks safety but not liveness, as well as the timing model including synchrony and partial synchrony. 
The safety and liveness guarantee of the protocol only hold for clients with correct fault and timing assumptions. 
For partial synchrony, Flexible BFT extends PBFT~\cite{castro1999practical} with flexible Byzantine quorums, where several different types of quorums used in the protocol are identified, and the optimal resilience solution is achieved with two quorum sizes $q_r,q_c$, where the replicas proceed with quorum size $q_r$ and clients commit with quorum size $q_c$. 
Intuitively, our optimistic fault tolerance solution shows how to increase the quorum size of $q_c$ with indirect vote messages in the chain extension, in contrast to Flexible BFT where $q_c$ votes come from direct votes. 
The FBFT algorithm can be adopted to achieve SFT. However, doing so would incur $O(n^2)$ message complexity per block decision, while our SFT-DiemBFT keeps the complexity linear.
A related work~\cite{li2007beyond} guarantees a weaker consistency called fork consistency in PBFT when the number of Byzantine faults exceeds $f$. In contrast, our SFT can guarantee safety (linearizability) for any $x$-strong committed blocks as long as the number of faults does not exceed $x$. 
A recent work called BFT Forensics~\cite{sheng2020bft} shows how to detect at least $f+1$ Byzantine parties under safety violations due to the total number of Byzantine faults $>f$. In contrast, our SFT aims to guarantee safety even if the number of Byzantine faults exceeds the assumed threshold, by utilizing the optimistic periods to strong commit the blocks.
Another related line of work investigates solving BFT replication under synchrony with an asynchronous fallback, with optimal resilience tradeoffs between synchrony and asynchrony~\cite{blum2019synchronous,blum2020network} or optimal communication complexity~\cite{spiegelman2020search}.

\paragraph{Optimistic BFT protocols.}
Another related line of work on optimistic BFT protocols focuses on including a fast path in BFT protocols to improve the performance such as communication complexity and latency under the optimistic scenarios, for asynchronous protocols~\cite{kursawe2005optimistic, kursawe2002optimistic}, partially synchronous protocols~\cite{abd2005fault, kotla2007zyzzyva, guerraoui2010next} and synchronous protocols~\cite{pass2018thunderella, cryptoeprint:2018:980, nibesh2020optimality, cryptoeprint:2020:406}.
It should be noted that the goal of this paper is different from the works on optimistic BFT. Our strengthened fault tolerance utilizes the optimistic periods to tolerate more than one-third of the faults even after the optimistic periods, while the literature on optimistic BFT provides a fast path for efficient decisions during the optimistic periods. 

\paragraph{Asymmetric Byzantine Quorum System.}
A related line of work studies general Byzantine quorum system~\cite{malkhi1998byzantine} where a failure-prone set characterizes all sets of possibly faulty replicas instead of just a threshold number. 
Several related works on asymmetric Byzantine quorum system~\cite{cachin2020asymmetric,cachin2020asymmetricconsensus} investigate a model where replicas may have different beliefs on the failure-prone sets, and how to achieve agreement in such a model. Comparing with FBFT and our work that provide the flexibility of various resilience guarantees from the clients' perspective optimistically, the asymmetric Byzantine quorum system provides the flexibility at the replica level for worst-case scenarios. 
As a result, the guarantees of the protocol may not hold for all honest replicas in the asymmetric Byzantine quorum system~\cite{cachin2020asymmetric,cachin2020asymmetricconsensus}.
Another related line of work investigates federated Byzantine quorum systems (FBQS) where every replica has its own set of trusted nodes, known as Stellar consensus~\cite{lokhava2019fast,losa2019stellar,garcia2020deconstructing}. 
More details on the difference between asymmetric Byzantine quorum systems and federated Byzantine quorum systems can be found in~\cite{cachin2020asymmetric}.

\section{Conclusion}
\label{sec:conclusion}
In this paper, we propose solutions for efficiently implementing strengthened fault tolerance in chain-based Byzantine fault tolerant state machine replication protocols under partial synchrony, where blocks in the chain can gradually obtain higher assurance against more than one-third corruption during the optimistic period.
We also implement the proposed solution and evaluate the performance under real-world scenarios.

\bibliographystyle{plain}
\bibliography{references}

\appendix

\section{Missing Proofs}
\label{sec:proofs}

\subsection{Proof for Solution with Generalized \strongvote}
\label{sec:proofs:strongvote}
In this section, we prove the safety of  the extension in Section~\ref{sec:discussion:strongvote}.
First, in the proof of Lemma~\ref{lem:1} with generalized \strongvote, we have the same observation that any honest replica voted for a conflicting block $B'$ will not be the endorser of $B$, because $r\in D_{\ff}$ and thus $r\notin \mathcal{I}$. Rest of the proof of Lemma~\ref{lem:1} follows.
Similarly, we can also show that Lemma~\ref{lem:2} holds with generalized \strongvote as well.
In the proof of Lemma~\ref{lem:2}, the only difference with generalized \strongvote is the case when $r'\geq r+3$ where $r'$ is the round number of the conflicting certified block $B'$.
Similarly, any honest replica voted for $B'$ will not be the endorser of $B_{r+2}$, since $r+2\in [r,r']\subseteq D_{\ff}$ and thus $r+2\notin \mathcal{I}$. Rest of the proof follows, and Theorem~\ref{thm:safety} also holds.

\subsection{Proof of Theorem~\ref{thm:newliveness}}
\begin{proof}
    By the liveness proof of the original DiemBFT protocol~\cite{baudet2019state}, the round-$r$ block $B$ will be committed in round $r+3$. 
    We first prove that the round-$r$ block will get $n-t$ endorsers within $n$ rounds.
    Since the system is synchronous, all honest replicas will only vote for blocks with round number $\geq r$ after receiving the round-$r+2$ block due to the locking and voting rule. Also, leaders of round $r$ to $r+2$ are honest, no fork is created during round $r$ to $r+2$, which implies that all honest replicas have $r\notin D_{\ff}$ since they will not vote for any block extending round $<r$ after round $r+2$.
    Therefore, every later vote from the honest replica endorses $B$. Within $n$ rounds, each replica will become the leader once due to the round-robin leader rotation and include its \strongvote in the \strongqc. Therefore, within $n$ rounds every replica will become the endorser of $B$, except for those $t$ Byzantine replicas that may not vote. 
    Similarly, for the round-$(r+1)$ and round-$(r+2)$ blocks, within $n+2$ rounds, each of them will get $n-t$ endorsers. By the strong commit rule, the round-$r$ block $B$ is $(2f-t)$-strong committed.
\end{proof}

\section{Adapting FBFT~\cite{malkhi2019flexible} for DiemBFT}
\label{sec:fbft}
FBFT proposes the notion of flexible BFT quorums, which identifies quorums of different sizes to provide different levels of fault tolerance guarantee for consensus decisions. Although in FBFT~\cite{malkhi2019flexible} only the solution based on PBFT is proposed, it is straightforward to extend the idea of flexible BFT quorums to chain-based BFT SMR protocols such as HotStuff/DiemBFT as well.
The strong commit rule of FBFT on DiemBFT is the following.

{\bf Strong commit rule.}
    The replica $x$-strong commits a block $B_k$ and all its ancestors, if and only if there exists three adjacent blocks $B_k,B_{k+1},B_{k+2}$ in the chain with consecutive round numbers, 
    and each of $B_k,B_{k+1},B_{k+2}$ has \underline{at least $x+f+1$ distinct signed \vote messages}.
    
Intuitively, the higher level of resilience is achieved by directly increasing the quorum certificate size in the $3$-chain rule when strong committing a block. 
However, the protocol loses liveness with QC size larger than $2f+1$, since the number of Byzantine faults may be $f$. 
Therefore, replicas should advance to the next round with the standard quorum size of $2f+1$, and any extra \vote messages from earlier rounds need to be multicasted by the leader of that round.
Due to the extra work the leader needs to conduct for every round, the amortized message complexity per round increases in the FBFT approach. 
Since there may be up to $f$ extra votes not included in the QC, the leader may need to multicast up to $f$ extra messages, leading to an amortized message complexity $O(fn)=O(n^2)$ per round and thus $O(n^2)$ per block decision.

\section{A Counter-example for Counting All Indirect Votes}
\label{sec:diembft:quorumdiversity}
\begin{figure}
    \centering

\resizebox{\textwidth}{!}{

\tikzset{every picture/.style={line width=0.75pt}} 

\begin{tikzpicture}[x=0.75pt,y=0.75pt,yscale=-1,xscale=1]

\draw   (20,150.67) -- (60.2,150.67) -- (60.2,189.6) -- (20,189.6) -- cycle ;
\draw   (110,110.67) -- (150.2,110.67) -- (150.2,149.6) -- (110,149.6) -- cycle ;
\draw   (190,110.67) -- (230.2,110.67) -- (230.2,149.6) -- (190,149.6) -- cycle ;
\draw   (270,110.67) -- (310.2,110.67) -- (310.2,149.6) -- (270,149.6) -- cycle ;
\draw   (350,110.67) -- (390.2,110.67) -- (390.2,149.6) -- (350,149.6) -- cycle ;
\draw   (110,190.67) -- (150.2,190.67) -- (150.2,229.6) -- (110,229.6) -- cycle ;
\draw   (190,190.67) -- (230.2,190.67) -- (230.2,229.6) -- (190,229.6) -- cycle ;
\draw  [fill={rgb, 255:red, 0; green, 0; blue, 0 }  ,fill opacity=1 ] (170,123) -- (175,130) -- (170,137) -- (165,130) -- cycle ;
\draw    (150.5,130) -- (190.5,130) ;
\draw  [fill={rgb, 255:red, 0; green, 0; blue, 0 }  ,fill opacity=1 ] (250,123) -- (255,130) -- (250,137) -- (245,130) -- cycle ;
\draw    (230.5,130) -- (270.5,130) ;
\draw  [fill={rgb, 255:red, 0; green, 0; blue, 0 }  ,fill opacity=1 ] (330,123) -- (335,130) -- (330,137) -- (325,130) -- cycle ;
\draw    (310.5,130) -- (349.5,130) ;
\draw  [fill={rgb, 255:red, 0; green, 0; blue, 0 }  ,fill opacity=1 ] (170,203) -- (175,210) -- (170,217) -- (165,210) -- cycle ;
\draw    (150.5,210) -- (190.5,210) ;
\draw  [fill={rgb, 255:red, 0; green, 0; blue, 0 }  ,fill opacity=1 ] (82,163) -- (87,170) -- (82,177) -- (77,170) -- cycle ;
\draw    (60.5,170) -- (86,170) ;
\draw    (109.5,130) -- (87,170) ;
\draw    (87,170) -- (109.5,211) ;
\draw   (270,190.67) -- (310.2,190.67) -- (310.2,229.6) -- (270,229.6) -- cycle ;
\draw  [fill={rgb, 255:red, 0; green, 0; blue, 0 }  ,fill opacity=1 ] (250,203) -- (255,210) -- (250,217) -- (245,210) -- cycle ;
\draw    (230.5,210) -- (270.5,210) ;
\draw   (350,190.67) -- (390.2,190.67) -- (390.2,229.6) -- (350,229.6) -- cycle ;
\draw  [fill={rgb, 255:red, 0; green, 0; blue, 0 }  ,fill opacity=1 ] (330,203) -- (335,210) -- (330,217) -- (325,210) -- cycle ;
\draw    (310.5,210) -- (350.5,210) ;
\draw   (430,190.67) -- (470.2,190.67) -- (470.2,229.6) -- (430,229.6) -- cycle ;
\draw  [fill={rgb, 255:red, 0; green, 0; blue, 0 }  ,fill opacity=1 ] (410,203) -- (415,210) -- (410,217) -- (405,210) -- cycle ;
\draw    (390.5,210) -- (430.5,210) ;
\draw    (170.5,117) -- (170.5,89) ;
\draw [shift={(170.5,87)}, rotate = 450] [color={rgb, 255:red, 0; green, 0; blue, 0 }  ][line width=0.75]    (10.93,-3.29) .. controls (6.95,-1.4) and (3.31,-0.3) .. (0,0) .. controls (3.31,0.3) and (6.95,1.4) .. (10.93,3.29)   ;
\draw    (249.5,117) -- (231.6,89.67) ;
\draw [shift={(230.5,88)}, rotate = 416.77] [color={rgb, 255:red, 0; green, 0; blue, 0 }  ][line width=0.75]    (10.93,-3.29) .. controls (6.95,-1.4) and (3.31,-0.3) .. (0,0) .. controls (3.31,0.3) and (6.95,1.4) .. (10.93,3.29)   ;
\draw    (329.5,116) -- (329.5,88) ;
\draw [shift={(329.5,86)}, rotate = 450] [color={rgb, 255:red, 0; green, 0; blue, 0 }  ][line width=0.75]    (10.93,-3.29) .. controls (6.95,-1.4) and (3.31,-0.3) .. (0,0) .. controls (3.31,0.3) and (6.95,1.4) .. (10.93,3.29)   ;
\draw    (170,240) -- (169.54,264) ;
\draw [shift={(169.5,266)}, rotate = 271.1] [color={rgb, 255:red, 0; green, 0; blue, 0 }  ][line width=0.75]    (10.93,-3.29) .. controls (6.95,-1.4) and (3.31,-0.3) .. (0,0) .. controls (3.31,0.3) and (6.95,1.4) .. (10.93,3.29)   ;
\draw    (331,240) -- (330.54,264) ;
\draw [shift={(330.5,266)}, rotate = 271.1] [color={rgb, 255:red, 0; green, 0; blue, 0 }  ][line width=0.75]    (10.93,-3.29) .. controls (6.95,-1.4) and (3.31,-0.3) .. (0,0) .. controls (3.31,0.3) and (6.95,1.4) .. (10.93,3.29)   ;
\draw    (250,240) -- (270.23,264.46) ;
\draw [shift={(271.5,266)}, rotate = 230.41] [color={rgb, 255:red, 0; green, 0; blue, 0 }  ][line width=0.75]    (10.93,-3.29) .. controls (6.95,-1.4) and (3.31,-0.3) .. (0,0) .. controls (3.31,0.3) and (6.95,1.4) .. (10.93,3.29)   ;
\draw    (410,240) -- (390.74,264.43) ;
\draw [shift={(389.5,266)}, rotate = 308.25] [color={rgb, 255:red, 0; green, 0; blue, 0 }  ][line width=0.75]    (10.93,-3.29) .. controls (6.95,-1.4) and (3.31,-0.3) .. (0,0) .. controls (3.31,0.3) and (6.95,1.4) .. (10.93,3.29)   ;

\draw (40.1,170.13) node  [font=\large]  {$B_{r-1}$};
\draw (130.1,130.13) node  [font=\large]  {$B_{r}$};
\draw (210.1,130.13) node  [font=\large]  {$B_{r+1}$};
\draw (290.1,130.13) node  [font=\large]  {$B_{r+2}$};
\draw (370.1,130.13) node  [font=\large]  {$B_{r+3}$};
\draw (130.1,210.13) node  [font=\large]  {$B'_{r+1}$};
\draw (210.1,211.13) node  [font=\large]  {$B'_{r+4}$};
\draw (290.1,210.13) node  [font=\large]  {$B'_{r+5}$};
\draw (370.1,211.13) node  [font=\large]  {$B'_{r+6}$};
\draw (450.1,211.13) node  [font=\large]  {$B'_{r+7}$};
\draw (18,60) node [anchor=north west][inner sep=0.75pt]  [font=\small] [align=left] {$\displaystyle QC_{r} =QC_{r+1} =\{h_{1} ,...,h_{f}\} \cup \{b_{1} ,...,b_{f+1}\}$};
\draw (302,55) node [anchor=north west][inner sep=0.75pt]  [font=\small] [align=left] {$\displaystyle QC_{r+2} =\left\{h_{1} ,...,h_{f} ,\underline{h_{f+1}}\right\} \cup \{b_{1} ,...,b_{f}\}$};
\draw (157,143) node [anchor=north west][inner sep=0.75pt]  [font=\large] [align=left] {$\displaystyle QC_{r}$};
\draw (233,143) node [anchor=north west][inner sep=0.75pt]  [font=\large] [align=left] {$\displaystyle QC_{r+1}$};
\draw (314,143) node [anchor=north west][inner sep=0.75pt]  [font=\large] [align=left] {$\displaystyle QC_{r+2}$};
\draw (153,223) node [anchor=north west][inner sep=0.75pt]  [font=\large] [align=left] {$\displaystyle QC'_{r+1}$};
\draw (16,276) node [anchor=north west][inner sep=0.75pt]  [font=\small] [align=left] {$\displaystyle QC'_{r+1} =\left\{\underline{h_{f+1}} ,...,h_{2f}\right\} \cup \{b_{1} ,...,b_{f+1}\}$};
\draw (267,281) node [anchor=north west][inner sep=0.75pt]  [font=\small] [align=left] {$\displaystyle QC=\{h_{1} ,...,h_{f} ,h_{f+2}\} \cup \{b_{1} ,...,b_{f+1}\}$};
\draw (412,122) node [anchor=north west][inner sep=0.75pt]  [font=\large] [align=left] {$\displaystyle h_{f+1}$ {\small commits} $\displaystyle B_{r}$};
\draw (483,201) node [anchor=north west][inner sep=0.75pt]  [font=\large] [align=left] {$\displaystyle h_{1} ,...,h_{f}$ {\small commits} $\displaystyle B'_{r+4}$};

\end{tikzpicture}
}
    
    \caption{An Counter-example for Counting All Indirect Votes.}
    \label{fig:qc_diveristy}
\end{figure}
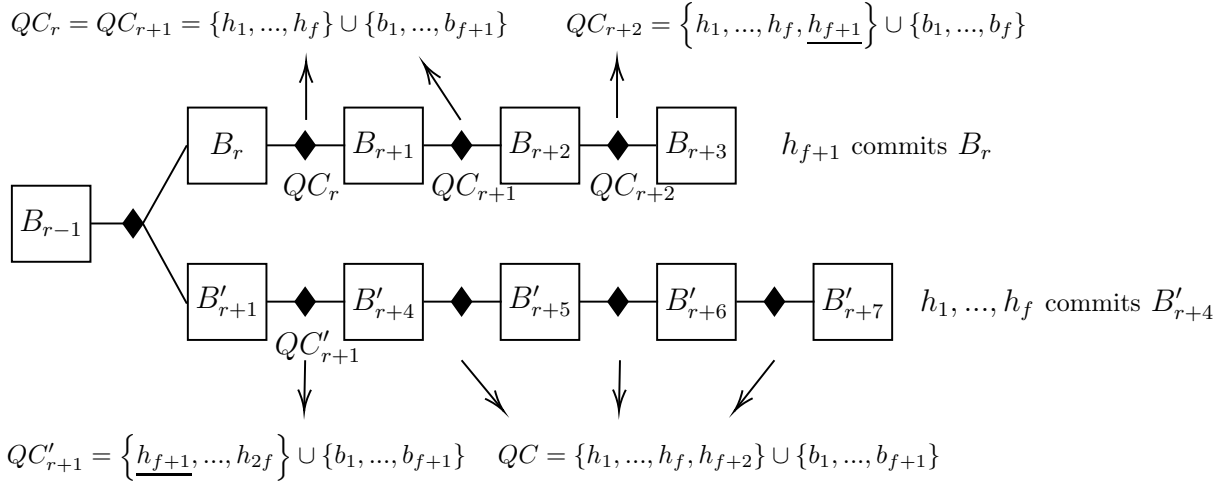

As mentioned in Section~\ref{sec:diembft:oft}, the approach that simply counts all indirect votes in the chain that extends a block $B$ as direct votes for $B$ may overcalculates the fault tolerance of $B$.
We show a simple counter-example to demonstrate why not all indirect votes should be counted, as shown in Figure~\ref{fig:qc_diveristy}.
Suppose that there are $f+1$ Byzantine replicas named $b_1,...,b_{f+1}$, and rest of the $2f$ replicas $h_1,...,h_{2f}$ are honest.
Consider a possible fork in the blockchain as follows.
In round $r$, after the leader proposes $B_r$, $f$ honest replicas $h_1,...,h_f$ and $f+1$ Byzantine replicas $b_1,...,b_{f+1}$ vote for $B_r$. 
In round $r+1$, the leader is Byzantine, it proposes two conflicting blocks $B_{r+1}$ (extending $B_{r}$) and $B'_{r+1}$ (extending an earlier block $B_{r-1}$).
The same set of replicas above vote for $B_{r+1}$, and the remaining $f$ honest replicas together with $f+1$ Byzantine replicas vote for $B_{r+1}'$.
Thus, both $B_{r+1}$ and $B_{r+1}'$ get certified.
In round $r+2$, the leader proposes block $B_{r+2}$ that extends $B_{r+1}$, and $f+1$ honest replicas $h_1,...,h_f,h_{f+1}$ and $f$ Byzantine replicas $b_1,...,b_f$ vote for $B_{r+2}$.
Note that the honest replica $h_{f+1}$ can vote for $B_{r+2}$ after voting for $B'_{r+1}$ due to the voting rule.
Then, due to the new vote of replica $h_{f+1}$, the $3$-chain $B_{r},B_{r+1},B_{r+2}$ now has a QC of size $2f+2$ for each block, which implies that $B_r$ is $(f+1)$-strong committed. By definition, no other conflicting $(f+1)$-strong commit should happen since the actual number of faults is $\leq f+1$.
However, a conflicting $(f+1)$-strong commit can happen, due to the fact that the vote from $h_{f+1}$ is falsely counted as a vote contributing to the resilience of $B_r$.
Continuing the example in Figure~\ref{fig:qc_diveristy}, in round $r+4$  the leader is Byzantine and proposes $B_{r+4}'$ extending $B_{r+1}'$. By the voting rule, any honest replica is able to vote for $B_{r+4}'$ because its $r_{lock}\leq r+1$ due to the locking rule. Similarly, replicas are able to vote for subsequent blocks extending $B_{r+4}'$, and thus $(f+1)$-strong commit the block $B_{r+4}'$.

Therefore, from the counter-example above, we observe that not all indirect votes should be considered a vote for the block when computing the block's resilience.
Our solution SFT-DiemBFT in Section~\ref{sec:diembft:oft} amends this idea by identifying the correct set of indirect votes for computing strong commits.

\section{Streamlet with Strengthened Fault Tolerance}
\label{sec:streamlet}

Another recent proposal on chain-based BFT SMR is Streamlet~\cite{chan2020streamlet}, which focuses on the protocol simplicity.
We show that our approach for implementing strengthened fault tolerance based on endorsements from the indirect vote messages can also apply to the Streamlet protocol.
In Section~\ref{sec:streamlet:overview}, we will first present the overview of the Streamlet protocol, and then show our SFT solution named SFT-Streamlet in Section~\ref{sec:streamlet:oft}.
Comparing to SFT-DiemBFT, one advantage of SFT-Streamlet is that it requires more effort from the adversary to revert a strong commit (see details in Section~\ref{sec:streamlet:comparison}).

\subsection{Overview of Streamlet}
\label{sec:streamlet:overview}
We present the Streamlet protocol by completing the prototype in Section \ref{sec:preliminary:prototype}, as shown in Figure~\ref{fig:streamlet}. The main motivation of Streamlet is to achieve simplicity.
Recall that we say a chain is certified if all its blocks are certified.
The protocol advances in rounds, and each round has a duration $2\Delta$ where $\Delta$ is the assumed maximum network delay.
In each round, the leader proposes a block extending the longest certified chain it knows.
The replicas will vote for the first proposal by the leader, if the proposed block extends one of the longest certified chains it has seen.
When there exist three adjacent certified blocks with consecutive round numbers in the blockchain, the first two blocks in the $3$-chain and all the ancestor blocks are committed. 

Compared with the DiemBFT protocol~\cite{baudet2019state} (Section~\ref{sec:diembft:overview}), Streamlet has the following main differences:
(1) To simplify the protocol, Streamlet adopts a simple Pacemaker that advances the round in a lock-step fashion with an assumed network delay $\Delta$, and thus the protocol does not advance with network speed (not responsive).
(2) Replicas always propose and vote for the longest chain, thus the proposing/voting/locking/commit rules are different. In other words, the rules depend on the {\em height} of the block in the blockchain. We call such approach {\em height-based} as different from HotStuff/DiemBFT which is {\em round-based}.
(3) In each round, the replicas send \vote messages to all other replicas instead of just the next leader. Replicas also forward any unseen message it received to others, leading to a message complexity $O(n^3)$ per round, and hence amortized $O(n^3)$ cost per block decision.
As can be seen above, Streamlet sacrifices performance such as message complexity and commit latency to achieve protocol simplicity.

\begin{figure}[t!]
    \centering
    \begin{mybox}
The Streamlet protocol can be defined by the prototype in Figure~\ref{fig:prototype} with the following components.

For simplicity, the Streamlet also assumes a message echo mechanism, where an honest process forwards a message to all other processes when it receives a previously unseen message.
\begin{itemize}[noitemsep,topsep=0pt]
    \item \underline{\state.}
    Each replica locally keeps all certified blocks it has seen.

    \item \underline{Proposing rule.}
    The leader proposes a height-$k$ block $B_k=(H(B_{k-1}), qc, txn)$ extending the last block $B_{k-1}$ of the {\em longest certified chain} known to the replica, and $qc$ is the quorum certificate of $B_{k-1}$.
    
    \item \underline{Voting rule.}
    Upon receiving the first proposal $\langle \propose, B, r\rangle_{L_r}$ at round $r$,
    multicast a \vote message in the form of $\langle \vote, B, r \rangle_i$ to all other replicas iff
    $B_k$ extends one of the {\em longest certified chains} known to the replica.
    
    \item \underline{Locking rule.}
    Upon receiving a certified block, the replica updates its longest certified chain.
    
    \item \underline{Commit rule.}
    The replica commits a block $B_k$ and all its ancestors, if and only if there exist three adjacent certified blocks $B_{k-1},B_k,B_{k+1}$ in the chain with consecutive round numbers, i.e., 
    $B_{k+1}.round=B_k.round+1=B_{k-1}.round+2$.
    
    \item \underline{Synchronization rule.}
    After $2\Delta$ time in round $r$ where $\Delta$ is the assumed maximum network delay after GST, the replica advances the round number to $r+1$.
    
    \end{itemize}

    \end{mybox}
    \caption{Streamlet Protocol.}
    \label{fig:streamlet}
\end{figure}

\subsection{SFT-Streamlet}
\label{sec:streamlet:oft}

This section presents our SFT-Streamlet protocol for implementing strengthened fault tolerance for Streamlet.
The idea is analogous to SFT-DiemBFT presented in Section~\ref{sec:diembft:oft}.
The main difference is that the \marker in the \strongvote of SFT-Streamlet records the {\em largest height number} of any voted conflicting block, instead of the round number in SFT-DiemBFT.
The definition of endorsement and strong commit rule is also slightly different.
Endorsement now has an additional parameter $k$, and a \strongvote for block $B'$ $k$-endorses a block $B$ if and only if $B=B'$, or $B'$ extends $B$ and $\marker<k$.
The strong commit rule for an $x$-strong commit requires a $3$-chain of blocks $B_{k-1},B_{k},B_{k+1}$ with consecutive round numbers and each has $\geq x+f+1$ $k$-endorsers.

\begin{figure}[t!]
    \centering
    \begin{mybox}
The SFT-Streamlet protocol modifies the original Streamlet protocol in Figure~\ref{fig:diembft} with the following changes.

\begin{itemize}[noitemsep,topsep=0pt]
    \item {\bf Changes to local \state.} 
    For every fork in the blockchain, the replica keeps the highest voted block on that fork.
    
    \item {\bf Strong-vote and strong-QC.}
    For a \vote message of some block $B$, the replica additionally includes a $\marker=\max\{k ~|~ \text{$B'_k$ conflicts $B$ and replica voted for $B'_k$}\}$ in the \vote. 
    Then, the \strongvote is defined as $\langle \vote, B, r, \marker \rangle_i$.
    Each block now contains a \strongqc consisting of $2f+1$ distinct signed \strongvotes of its parent block.
    
    \item {\bf $k$-Endorsements.}
    A \strongvote $\langle \vote, B', r', \marker \rangle_i$ $k$-endorses a block $B$, if and only if $B=B'$, or $B'$ extends $B$ and $\marker<k$.
    A replica is a $k$-endorser of block $B$, if its \strongvote $k$-endorses $B$.

    \item {\bf Strong commit rule.}
    The replica $x$-strong commits a height-$k$ block $B_k$ and all its ancestors, if and only if there exists three adjacent blocks $B_{k-1},B_{k},B_{k+1}$ in the chain with consecutive round numbers, 
    and $B_{k-1},B_{k},B_{k+1}$ all have \underline{at least $x+f+1$ $k$-endorsers}.
\end{itemize}

    \end{mybox}
    \caption{Strengthened Fault Tolerance for Streamlet.}
    \label{fig:streamlet:oft}
\end{figure}

\subsection{Proof of Correctness}
\label{sec:streamlet:proof}
Let $\fc(B)$ denote the set of replicas that certifies $B$ and hence $|\fc(B)|=2f+1$.
\begin{lemma}\label{lem:streamlet:1}
    If a block $B$ with round number $r$ has $E$ $k$-endorsers, then no other conflicting height-$k$ block with round number $r$ can be certified under $t\leq E-f-1$ Byzantine replicas.
\end{lemma}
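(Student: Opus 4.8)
The plan is to mirror the proof of Lemma~\ref{lem:1} almost verbatim, since SFT-Streamlet differs from SFT-DiemBFT only in that the \marker records heights rather than round numbers, and endorsement carries the extra parameter $k$. First I would suppose for contradiction that there is a conflicting certified block $B'$ of round number $r$ at the same height $k$. Since honest replicas vote for only one block per round, no honest replica in $\fc(B')$ directly voted for $B$. Some replicas in $\fc(B')$ might later vote for blocks extending $B$ — this is allowed by Streamlet's voting rule whenever the extending block lies on an at-least-as-long certified chain — but any such replica has recorded $\marker \geq k$, because $B'$ is a height-$k$ block conflicting with $B$ that it voted for. Hence by the definition of $k$-endorsement, no honest replica in $\fc(B')$ $k$-endorses $B$.

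Next I would count the intersection. The set of $k$-endorsers of $B$ has size $\geq E$, and $|\fc(B')| = 2f+1$, so their intersection has size at least $E + (2f+1) - n = E - f$ (using $n = 3f+1$). By the previous paragraph this intersection contains only Byzantine replicas, so there are at least $E - f$ Byzantine replicas, contradicting $t \leq E - f - 1$. Therefore no conflicting height-$k$ block of round $r$ can be certified.

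The only point that needs care — and I'd flag it as the step to get right rather than a genuine obstacle — is why a replica that voted for $B'$ must have $\marker \geq k$ at the time it casts any later vote, and why that later vote therefore fails to $k$-endorse $B$: the \marker is defined as the \emph{maximum} height of a voted conflicting block, and once $B'_k$ conflicts with $B$ the value can only stay $\geq k$, so a \strongvote for a block $B''$ extending $B$ has $\marker \geq k$, which by the endorsement rule ($\marker < k$ required) means it does not $k$-endorse $B$; the case $B'' = B$ is excluded because $B'' \neq B'$ conflicts. This is the direct analogue of the round-number argument in Lemma~\ref{lem:1}, and once stated the rest of the counting is identical, so I would write the proof compactly by pointing to that correspondence.
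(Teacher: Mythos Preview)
Your proposal is correct and follows essentially the same approach as the paper's own proof: assume a conflicting certified height-$k$ block $B'$ of the same round, argue that honest replicas in $\fc(B')$ neither vote for $B$ directly (one vote per round) nor $k$-endorse $B$ indirectly (their \marker is $\geq k$), and then derive the contradiction via the intersection count $E + (2f+1) - n = E - f > t$. Your extra remark about why the \marker stays $\geq k$ and why the $B'' = B$ case of the endorsement rule is excluded is a helpful clarification that the paper leaves implicit.
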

\begin{proof}
    Suppose for the sake of contradiction, let block $B_k'$ be a conflicting certified height-$k$ block with round number $r$.

    The honest replicas in $\fc(B_k')$ will not vote for $B$, since both blocks are of round number $r$ and honest replicas only vote for one block at each round.
    The replicas in $\fc(B_k')$ may vote for blocks extending $B$, as these blocks may extend the longest certified chain. 
    However, by the definition of $k$-endorse, the honest replicas in $\fc(B_k')$ do not $k$-endorse block $B$, since they voted for $B_k'$ and have $\marker\geq k$. 
    Therefore, the intersection of $\fc(B_k')$ and the set of $k$-endorsers of $B$ contains no honest replicas,  only Byzantine replicas.
    Since the size of intersection is at least $E+|\fc(B_k')|-n=E-f$, there should be at least $E-f$ Byzantine replicas, which contradicts the assumptions that there are at most $t\leq E-f-1$ Byzantine replicas.
    Therefore, any conflicting height-$k$ block $B_k'$ cannot be certified at round $r$.
\end{proof}

\begin{lemma}\label{lem:streamlet:2}
    If a height-$k$ block $B_k$ is $x$-strong committed at some honest replica by the strong commit rule, then no other conflicting height-$k$ block can be certified under $t\leq x$ where $t$ is the number of Byzantine faults.
\end{lemma}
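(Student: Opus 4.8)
The plan is to follow the same template as the proof of Lemma~\ref{lem:2} for SFT-DiemBFT, translating ``round'' into ``height'' where appropriate and using Lemma~\ref{lem:streamlet:1} in the role of Lemma~\ref{lem:1}. Assume $B_k$ is $x$-strong committed via a $3$-chain $B_{k-1},B_k,B_{k+1}$ of consecutive rounds $r-1,r,r+1$, each block having at least $x+f+1$ $k$-endorsers, and suppose for contradiction that a conflicting height-$k$ block $B_k'$ is certified under $t\le x$ Byzantine faults; let $r'$ be the round of $B_k'$. In every case the goal is to exhibit one of the three chained blocks $B^\star$ whose set of (at least $x+f+1$) $k$-endorsers contains no honest member of $\fc(B_k')$. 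Then $\fc(B_k')\cap(\text{set of }k\text{-endorsers of }B^\star)$ consists only of Byzantine replicas, and since $|\fc(B_k')|=2f+1$ this intersection has size at least $(2f+1)+(x+f+1)-(3f+1)=x+1$, contradicting $t\le x$. The case $r'=r$ is immediate: apply Lemma~\ref{lem:streamlet:1} with $B=B_k$, which is itself a height-$k$ block of round $r$ with at least $x+f+1$ $k$-endorsers, so no conflicting height-$k$ block of round $r$ can be certified under $t\le x$.

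For $r'\neq r$ I would take $B^\star=B_k$ when $r'<r$ and $B^\star=B_{k+1}$ when $r'>r$, and prove disjointness by case analysis on how an honest replica $h\in\fc(B_k')$ could $k$-endorse $B^\star$. If $h$'s endorsing \strongvote is for a strict descendant $C$ of $B^\star$ on the $B_k$-fork, then $C$ has height $\ge k$ and conflicts with $B_k'$; moreover, since $C$'s proposal carries the quorum certificate of its entire prefix, any voter of $C$ has seen a certified chain of length $\ge k$, so by Streamlet's longest-certified-chain voting rule $h$ could not have voted for the height-$k$ block $B_k'$ (which extends only a length-$(k-1)$ certified chain) after voting for $C$. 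Hence $h$ voted for $B_k'$ first, its \marker is $\ge k$, and the endorsement of $B^\star$ via $C$ fails. If instead $h$'s endorsing \strongvote is the direct vote for $B^\star$: for $r'>r+1$, $h$ voted for $B_{k+1}$ at round $r+1$, hence already saw $B_k$ certified (carried in $B_{k+1}$'s proposal) and so has a certified chain of length $k$, which by the voting rule forbids it from later voting for the height-$k$ block $B_k'$ at round $r'>r+1$ — and $r'=r+1$ is impossible by one-vote-per-round, since $B_{k+1}$ and $B_k'$ share round $r+1$. Symmetrically, for $r'<r$, an honest voter of the height-$k$ block $B_k$ at round $r$ must have had its longest certified chain of length exactly $k-1$ at that moment, which is incompatible with having earlier helped certify the height-$k$ block $B_k'$ and seen that certificate. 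Combining the descendant case and the direct case gives the required disjointness, and the counting step closes the contradiction.

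I expect the main obstacle to be exactly this last step: making the ``incompatibility'' for $r'<r$ fully rigorous, i.e.\ showing that an honest certifier of the earlier conflicting block $B_k'$ cannot also be a direct voter of $B_k$ at the later round $r$. This is the height-based analogue of the locking argument that powers the ``$r'\ge r+3$'' branch of Lemma~\ref{lem:2}, but here it must be extracted purely from Streamlet's longest-certified-chain rule, the consecutive-round structure of the $3$-chain, and the echo/propagation guarantees, and it is the place where the height-$k$ restriction in the statement (as opposed to the ``any certified block'' conclusion of Lemma~\ref{lem:2}) is genuinely used — using $B_k$ rather than $B_{k+1}$ for $r'<r$ matters precisely because $B_k$ and $B_k'$ have the same height. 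Once this is settled, the $k$-endorsement/\marker bookkeeping and the quorum-intersection count are routine, and safety of SFT-Streamlet then follows from Lemma~\ref{lem:streamlet:2} exactly as Theorem~\ref{thm:safety} follows from Lemma~\ref{lem:2}.
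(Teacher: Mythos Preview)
Your overall template is right and matches the paper for $r'=r$ and $r'>r$, but the $r'<r$ branch has a genuine gap, and the obstacle you flag cannot be overcome with your choice $B^\star=B_k$. The problem is the clause ``$B=B'$'' in the definition of $k$-endorsement: a \emph{direct} \strongvote for $B_k$ always $k$-endorses $B_k$, irrespective of its \marker. An honest replica $h$ can perfectly well vote for the height-$k$ block $B_k'$ at round $r'<r$ and then later cast a direct vote for the height-$k$ block $B_k$ at round $r$, because both blocks extend certified chains of length $k-1$, and $h$ need not have \emph{seen} the certificate of $B_k'$ (it only contributed one vote to it). Neither the echo mechanism nor the consecutive-round structure forces $h$ to learn $QC(B_k')$ before round $r$. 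Hence $h$ can lie in $\fc(B_k')$ and simultaneously be a $k$-endorser of $B_k$, and your disjointness claim fails.

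The fix, and this is what the paper does, is to take $B^\star=B_{k-1}$ (not $B_k$) for $r'<r-1$, and to cover $r'\in\{r-1,r,r+1\}$ uniformly via Lemma~\ref{lem:streamlet:1} applied to $B_{k-1},B_k,B_{k+1}$ respectively. The point of using $B_{k-1}$ is the \emph{height drop}: any honest $h\in\fc(B_k')$ has already seen a certified chain of length $k-1$ when it voted for $B_k'$; since $B_{k-1}$ only extends a chain of length $k-2$, Streamlet's longest-certified-chain voting rule forbids $h$ from directly voting for $B_{k-1}$ at the later round $r-1$. This is exactly the ``symmetric'' ingredient you were looking for, and it does not require any propagation argument. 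With this corrected choice of $B^\star$ your descendant-vote analysis and the quorum-intersection count go through unchanged.
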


\begin{proof}
    The proof is analogous to the one for Lemma \ref{lem:2}.
    Since block $B_k$ is $x$-strong committed, there exists three blocks $B_{k-1},B_{k},B_{k+1}$ with consecutive round numbers $r-1,r,r+1$, and each of the block $B_{k-1},B_{k},B_{k+1}$ has $\geq x+f+1$ $k$-endorsers.
    Suppose that the lemma is not true for the sake of contradiction and let block $B_k'$ be a conflicting certified height-$k$ block with round number $r'$.
    Consider the following cases.
    \begin{itemize}
        \item When $r'< r-1$.  
        The honest replicas in $\fc(B_k')$ will not vote for $B_{k-1}$, since the certified chain that $B_{k-1}$ extends is shorter than the certified chain that $B_k'$ extends.
        The replicas in $\fc(B_k')$ may vote for blocks extending $B_{k-1}$, as these blocks may extend the longest certified chain. 
        However, by the definition of $k$-endorse, the honest replicas in $\fc(B_k')$ do not $k$-endorse block $B_{k-1}$, since they voted for $B_k'$ and have $\marker\geq k$. 
        Therefore, the intersection of $\fc(B_k')$ and the set of $k$-endorsers of $B_{k-1}$ contains no honest replicas,  only Byzantine replicas.
        Since the size of intersection is at least $E+|\fc(B_k')|-n=E-f$, there should be at least $E-f$ Byzantine replicas, which contradicts the assumptions that there are at most $t\leq E-f-1$ Byzantine replicas.
        
        \item When $r-1\leq r'\leq r+1$.
        By Lemma \ref{lem:streamlet:1}, for $i=-1,0,1$, since $B_{k+i}$ has $x+f+1$ $k$-endorsers, no conflicting height-$k$ block with round number $r+i$ can be certified under $t\leq x$ Byzantine replicas. 
        
        \item When $r'> r+1$. 
        The honest replicas in the set $\fc(B_{k+1})$ have seen a longest certified chain of length $k$ after voting for $B_{k+1}$.
        Thus, they will not vote for block $B_k'$ since it is extending a certified chain of length $k-1$.
        However, replicas in $\fc(B_k')$ may vote for blocks that extend $B_{k+1}$.
        By the definition of $k$-endorse, the honest replicas in $\fc(B_k')$ do not endorse block $B_{k+1}$, since they voted for $B_k'$ and have $\marker\geq k$.
        By the same argument of the $r'<r-1$ case, there should be at least $x+1$ Byzantine replicas, which contradicts the assumptions that there are at most $t\leq x$ Byzantine replicas.
        Therefore, any conflicting height-$k$ block $B_k'$ cannot be certified at round $r' > r+1$.
        
    \end{itemize}
\end{proof}

\begin{theorem}\label{thm:streamlet:safety}
    SFT-Streamlet with the set of $x$-strong commit rules defined in Figure~\ref{fig:streamlet:oft} is safe.
\end{theorem}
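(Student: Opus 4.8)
The plan is to mirror the structure of the SFT-DiemBFT safety argument (Theorem~\ref{thm:safety}), but with the round-based reasoning replaced by the height-based reasoning that Streamlet uses, and with Lemma~\ref{lem:streamlet:2} playing the role that Lemma~\ref{lem:2} played there. Unfolding the safety definition, I would assume for contradiction that under $t \le x$ Byzantine faults there are two conflicting blocks $B$ and $B'$ that are $x$-strong committed and $x'$-strong committed respectively, with $x' \ge x$. By the strong commit rule in Figure~\ref{fig:streamlet:oft}, $B$ being $x$-strong committed means there is a height-$k$ block $B_k$ that is the middle of a strong $3$-chain $B_{k-1}, B_k, B_{k+1}$ (consecutive rounds, each with at least $x+f+1$ $k$-endorsers) such that $B$ equals $B_k$ or is an ancestor of $B_k$; symmetrically there is a height-$k'$ block $B'_{k'}$ for $B'$.

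The first step is to upgrade the conflict between $B$ and $B'$ to a conflict between the two directly strong-committed middle blocks $B_k$ and $B'_{k'}$: if instead one of $B_k, B'_{k'}$ extended the other, then since $B$ is an ancestor-or-equal of $B_k$ and $B'$ is an ancestor-or-equal of $B'_{k'}$, both $B$ and $B'$ would lie on one chain and thus be comparable, contradicting that they conflict. The second step is a small case split on heights: without loss of generality take $k \le k'$ (otherwise swap the two roles, which is legitimate because $t \le x \le x'$ so Lemma~\ref{lem:streamlet:2} still applies to $B'_{k'}$). Let $C$ be the unique ancestor-or-equal of $B'_{k'}$ at height $k$. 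If $C = B_k$ then $B_k$ would be an ancestor-or-equal of $B'_{k'}$, contradicting the conflict just established; so $C \ne B_k$, and since two distinct blocks at the same height cannot extend one another, $C$ conflicts with $B_k$. Moreover $C$ is certified, since it is an ancestor of the directly strong-committed block $B'_{k'}$ and block chaining guarantees that every ancestor carried in the chain is certified. Thus $C$ is a certified height-$k$ block conflicting with $B_k$, while $B_k$ is $x$-strong committed and $t \le x$, directly contradicting Lemma~\ref{lem:streamlet:2}.

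I expect the obstacles here to be bookkeeping rather than conceptual, since all the quorum-intersection work — that an endorser count of at least $x+f+1$ at a given height rules out a conflicting certified block at that height under $t \le x$ faults — has already been discharged inside Lemma~\ref{lem:streamlet:1} and Lemma~\ref{lem:streamlet:2}. The care points are (i) stating cleanly that two distinct blocks at the same height in a chain necessarily conflict, (ii) observing that ancestors of a directly strong-committed block are certified, and (iii) handling the $k$ versus $k'$ asymmetry so that whichever instance of Lemma~\ref{lem:streamlet:2} is invoked has its fault hypothesis met, which is why the $x' \ge x$ assumption is used when swapping roles. With those in place the theorem reduces to the purely combinatorial chain argument sketched above.
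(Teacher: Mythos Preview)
Your proposal is correct and follows essentially the same approach as the paper: reduce to the directly strong-committed blocks $B_k$ and $B'_{k'}$, case-split on which height is smaller, and invoke Lemma~\ref{lem:streamlet:2} on the shorter side (using $t\le x\le x'$ when the roles are swapped). Your write-up is actually more careful than the paper's, which jumps directly from ``no conflicting height-$k$ block is certified'' to ``$B'_{k'}$ cannot be certified'' without explicitly naming the height-$k$ ancestor $C$ of $B'_{k'}$ or arguing that it is certified, whereas you spell out both of those steps.
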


\begin{proof}
    Suppose $B$ is $x$-strong committed due to some block $B_k$ being $x$-strong committed directly.
    Suppose for the sake of contradiction, under $t\leq x$ Byzantine replicas, a conflicting block $B'$ is also $x'$-strong committed with $x'\geq x$, due to block $B_{k'}'$ being $x'$-strong committed directly.
    Suppose $B_k$  has height $k$, and $B_{k'}'$ has height $k'$.
    If $k'\geq k$,
    by Lemma \ref{lem:streamlet:2}, no other height-$k$ block other than $B_k$ can be certified, which implies that the conflicting $B_{k'}'$ also cannot be certified. 
    This contradicts the assumption that $B_{k'}'$ is $x'$-strong committed, proving the theorem.
    If $k'<k$, again by Lemma \ref{lem:streamlet:2}, under $t\leq x\leq x'$ faults no other height-$k'$ block other than $B_{k'}'$ can be certified, which implies that the conflicting $B_{k}$ also cannot be certified. 
    This contradicts the assumption that $B_{k}$ is $x$-strong committed, proving the theorem.

\end{proof}

\begin{theorem}\label{thm:streamlet:liveness}
    After GST, if all $c\leq f$ faults are benign (crash) for a period of $n+2$ rounds, then any committed block can be $(2f-c)$-strong committed within at most $n+2$ rounds.
\end{theorem}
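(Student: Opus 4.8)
The plan is to mirror the liveness proof of SFT-DiemBFT (Theorem~\ref{thm:liveness}), adapting it to the height-based, lock-step structure of Streamlet. First I would observe that after GST the $2\Delta$ round duration makes the system synchronous: every message sent by an honest replica in round $r$ is delivered before round $r+1$. Since all $c$ faults are benign, no replica ever equivocates, so no two conflicting blocks are ever certified; consequently there is a single certified chain and all $n-c$ live honest replicas agree on the unique longest certified chain at every round. In particular, every honest \strongvote cast after GST carries $\marker = 0$, because the voter never voted for any conflicting block.

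Next I would fix a block $B_k$ of height $k$ that is (regular) committed. By the commit rule there is a $3$-chain $B_{k-1},B_k,B_{k+1}$ with consecutive round numbers, so $B_{k+1}$ is certified and lies on the unique certified chain. By the liveness of the original Streamlet protocol~\cite{chan2020streamlet}, the certified chain keeps growing: in each of the $n-c$ rounds (within any $n$ consecutive rounds) whose leader is live, that leader proposes a block extending the longest certified chain, hence extending (or equal to) $B_{k+1}$, and all $n-c$ live replicas vote for it, certifying it. The key step is the endorsement-accumulation argument: within $n$ rounds of $B_{k+1}$ being certified, the round-robin rotation makes every live honest replica $h$ the leader once; when $h$ is leader it assembles the \strongqc for the current chain tip $B_{j-1}$ (which is $B_{k+1}$ or extends it) and can include its own \strongvote for $B_{j-1}$ in that \strongqc, which then sits on the chain inside $h$'s proposed block. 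Since this \strongvote has $\marker = 0 < k$ and its block is $B_{k+1}$ or extends it, by the definition of $k$-endorsement it $k$-endorses each of $B_{k-1}$, $B_k$, and $B_{k+1}$; thus $h$ becomes a $k$-endorser of all three.

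Putting the counting together: within $n$ further rounds every one of the $n-c$ live honest replicas is a $k$-endorser of each of $B_{k-1},B_k,B_{k+1}$, so each of these three blocks has at least $n-c = 3f+1-c = (2f-c)+f+1$ $k$-endorsers, and the strong commit rule of Figure~\ref{fig:streamlet:oft} yields a $(2f-c)$-strong commit of $B_k$. Adding the two rounds in which the $3$-chain of $B_k$ itself is formed gives the claimed bound of at most $n+2$ rounds, which is also why the benign period is required to last $n+2$ rounds.

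I expect the main obstacle to be the bookkeeping around which block a leader's \strongvote endorses: one must argue carefully that, under benign faults, the parent block a live leader certifies genuinely equals or extends $B_{k+1}$, so that the $\marker = 0$ comparison simultaneously yields a $k$-endorsement of $B_{k+1}$, $B_k$, and $B_{k-1}$; and one must note that an honest leader indeed places its own \strongvote into the \strongqc it assembles (it holds its own vote with no network delay). The no-fork consequence of benign faults, together with the fact that the $3$-chain $B_{k-1},B_k,B_{k+1}$ already exists once $B_k$ is committed, is what keeps this adaptation short relative to the safety lemmas.
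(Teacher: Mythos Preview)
Your proposal is correct and matches the paper's approach exactly: the paper's own proof of this theorem is the single sentence ``The proof is analogous to that of Theorem~\ref{thm:liveness},'' and you have carried out precisely that analogy, adapted to Streamlet's height-based $k$-endorsement rule and lock-step rounds. The only imprecision is the claim that every honest \strongvote after GST has $\marker=0$ (forks may have existed before GST even with benign faults, due to asynchrony), but what your argument actually requires---and what your no-fork-during-the-benign-period observation already yields---is merely $\marker<k$, so the conclusion stands.
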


\begin{proof}
    The proof is analogous to that of Theorem~\ref{thm:liveness}.
\end{proof}

\subsection{Comparison of SFT-DiemBFT and SFT-Streamlet}
\label{sec:streamlet:comparison}
The solutions for implementing strengthened fault tolerance in DiemBFT and Streamlet share similar intuition and techniques: indirect \strongvotes for blocks in the chain extension can enhance the resilience of the earlier blocks.
As mentioned, to achieve simplicity, Streamlet sacrifices message complexity and latency compared to DiemBFT.
One main difference of Streamlet is the {\em height-based} protocol rules, which consists of proposing/voting/locking/commit rules that depends on the height of the blocks in the blockchain.
One observation is that the height-based protocol rules in Streamlet or SFT-Streamlet add {\em more difficulty for the adversary to launch a long-range attack}, i.e., to strong commit a block on a fork that conflicts with an earlier strong committed block that is buried deep in the blockchain.
Intuitively, since in SFT-Streamlet the honest replicas always vote for the longest certified chain, the adversary must create a fork with a similar length to make honest replicas switch and vote for the fork.
Therefore, if the adversary attempts to create an $x$-strong committed block that conflicts an existing $x$-strong committed block of $h$ heights earlier, it has to corrupt more than $x$ replicas for about $h$ rounds, to create a fork of length $h$ with certified blocks.
In contrast, to conflict an $x$-strong committed block in SFT-DiemBFT, the adversary only needs to corrupt more than $x$ replicas for one round to create a certified block $B'$ in the fork with a higher round number. Then every honest replicas in the original branch will be able to vote for a new block extending $B'$, since $B'$ has a round number larger than their locked value $r_{lock}$. Since the honest replicas can continue to vote for blocks extending $B'$, a conflicting $x$-strong committed block can be created in the fork.

\end{document}